\documentclass{article}
\usepackage{arxiv}
\usepackage[utf8]{inputenc} %
\usepackage[T1]{fontenc}    %
\usepackage{hyperref}       %
\usepackage{url}            %
\usepackage{booktabs}       %
\usepackage{amsfonts}       %
\usepackage{nicefrac}       %
\usepackage{microtype}      %
\usepackage{lipsum}

\usepackage{graphicx}
\usepackage[dvipsnames]{xcolor}
\usepackage{hyperref}
\usepackage{float}
\usepackage{multirow}
\usepackage{longtable}
\usepackage{amsmath}
\usepackage{amsthm}
\newtheorem{theorem}{Theorem}
\newtheorem{lemma}{Lemma}
\newtheorem*{obs*}{Observation}
\usepackage{times}
\usepackage{subcaption}
\usepackage{mathrsfs}
\usepackage[font=small,labelfont=bf]{caption}
\usepackage{rotating}
\usepackage{siunitx}
\usepackage{lineno}

\graphicspath{{./PaperFigures/}}

\newcommand{\fcaptionsize}{\small}

\title{Interventions with Inversity in Unknown Networks Can Help Regulate Contagion} %

\author{
  Vineet Kumar \\
  Yale School of Management\\
  Yale University\\
  \texttt{vineet.kumar@yale.edu} \\
   \And
 David Krackhardt \\
  Heinz School of Public Policy and Management\\
  Carnegie Mellon University\\
  \texttt{krack@cmu.edu} \\
   \And
 Scott Feld \\
  Department of Sociology\\
  College of Liberal Arts\\
  Purdue University\\
  \texttt{sfeld@purdue.edu} \\
}

\begin{document}
\maketitle

\begin{abstract}
Network intervention problems often benefit from selecting a highly-connected node to perform interventions using these nodes, e.g. immunization. However, in many network contexts, the structure of network connections is unknown, leading to a challenge. We develop and examine the mathematical properties of two distinct informationally light strategies, a novel global strategy and local strategy, that yield higher degree nodes in virtually any network structure. We further identify a novel network property called Inversity, whose sign determines which of the two strategies, local or global, will be most effective for a network. We demonstrate that local and global strategies obtain a several-fold improvement in node degree relative to a random selection benchmark for generated and real networks (including contact, affiliation and online networks). In some networks, they achieve a 100-fold improvement. We show how these new strategies can be used to control contagion of an epidemic spreading across a set of village networks, finding that the strategies developed here require far fewer ($<50\%$) nodes to be immunized, relative to the random strategy baseline. Prior research has typically used the complete network structure to choose nodes for optimal seeding. The relevant network is often costly to collect, and is privacy-invasive, requiring knowing each person's network neighbors, and might not be possible to obtain for time-sensitive interventions. Our interventions are less invasive of individual privacy, since each selected node only needs to nominate some network neighbors for intervention, while mathematically guaranteed to provide better connected nodes.
\end{abstract}

\keywords{Contagion \and Network Intervention \and Friendship Paradox}

\section{Introduction}
Network-based interventions are of crucial importance in any setting where an individual’s choice or action has an indirect impact on others. There are a wide range of network intervention applications. Consider: (a) A new infectious disease is spreading through a large population. We want to minimize the number of infected individuals by inoculating using a new vaccine; however, we only have a limited number of doses to administer. (b) We have a limited number of free samples of new product to distribute to consumers, so they can share information through word of mouth, and we would like to maximize the number of consumers who receive word of mouth. (c) We would like to identify virally spreading contagion (informational or biological) as quickly as possible by choosing individuals as observation stations (or for contact tracing).

Although seemingly distinct, these problems (a)-(c) represent a class of network interventions  \cite{valente2012network} in which we benefit from identifying more central or highly connected individuals in the network.\footnote{They belong to the class of ``simple contagion'' problems, which require only one rather than ``complex contagion'' that require multiple exposures \cite{centola2007complex,centola2010spread}.}
However, the challenge is that we do not have access to the \textit{relevant network structure}. In application (a), having the Facebook network structure might not be useful, since the relevant network would be the \textit{physical contact} network. In contrast, for  application (b), finding a high degree node using a physical contact network of everyone who interacts with a physician is unlikely to be fully informative in characterizing her opinion leadership in the profession.  For (c), and in epidemics like those analyzed in \S \ref{section:application}, carrying out contact tracing for all individuals can be expensive in effort and time.

We develop, model and characterize \textit{novel network intervention strategies}, which obtain highly connected nodes from a network by querying randomly chosen individuals, and which do not require access to the complete underlying network structure. These interventions termed  global and local strategies are based on the friendship paradox, and result from theoretical network properties developed here. We also derive a structural network property called inversity, which determines the relative effectiveness of the local and global strategies. These strategies have several advantages for implementation. First, they are informationally-light, in the sense that we don’t need to know anything about the network structure, and yet they provide provable advantages for almost all network structures, in contrast to most prior work that often requires complete knowledge of the network structure. The network may not be possible to obtain in a timely manner, or may vary over time. In contrast, for the strategies here, we need to identify initial individuals and obtain a (small) proportion of their friends to serve as seeds. Thus, the strategies can be implemented quickly since they only require local network information obtained by querying individuals or interaction data. Second, unlike other methods to obtain highly-connected nodes, our intervention strategies are much more privacy-sensitive and do not require the time and effort to map out the entire network. Third, the class of interventions here can be used in advance, i.e. for both prevention and treatment interventions.

\section{Friendship Paradox}
The Friendship Paradox, which our interventions are based on, is colloquially stated as ``your friends have more friends than you''  \cite{Feld1991,Zuckerman2001}.\footnote{The phenomenon has also been generalized to the idea that individual attributes and degree are correlated \cite{jo2014generalized}, e.g. an individual's co-authors are more likely to be cited \cite{eom2014generalized}, or that friends more active on social media\cite{hodas2013friendship}.} The intuition for why the friendship paradox helps obtain well-connected nodes is this: there are few well-connected hubs in real networks, and since they are connected to many other nodes (by definition), obtaining a friend (or neighbor) of a random node is likely to result in a hub with greater likelihood, compared to the case of randomly selecting nodes.
We establish that the friendship paradox is actually not just one statement, but a set of distinct claims (All theorems and proofs are in Supplement \S \ref{section:mathproofs}). First, we find an impossibility, i.e. the individual-level friendship paradox cannot hold for all individuals in a network (Theorem \ref{theorem:individualfp}). In practice for real networks, it can hold for a large proportion of nodes in the network (Figs. \ref{figure:fpindividual} and \ref{figure:individualfpcdf} in Supplement \S \ref{section:sindividualfp}). Second, we demonstrate that in contrast to this individual view, the \textit{average} number of friends of friends across the network can be characterized in two ways using the local and global mean defined below. Third, we find that both local and global means are greater than the mean degree of the network, and these means are related through a novel network characteristic we call inversity.

\paragraph*{Local and Global Means}
We formally characterize the two distinct but related network properties deriving from the friendship paradox relating to the ``average number of friends of friends." Denote a network (see Table \ref{table:notation-1} for full notation) as an undirected graph $\mathcal{G}=\left(V,E\right)$
with $V$ the set of nodes and $E$ the set of edges ($e_{ij}\in\left\{ 0,1\right\} $
denoting absence or presence of a connection between $i$ and $j$),
  $D_{i}$ refers to the degree of node $i$, and $N(i)$ the set of $i$'s neighbors.
We specify the local mean as:
\begin{equation}
\mu_{L}=\frac{1}{N}\sum_{i\in V} \left[ \dfrac{1}{D_i} \sum_{j\in N(i)}D_j \right]
\end{equation}
The global mean is defined as the
ratio of the total number of friends of friends to the total number
of friends in the network, consistent with \cite{Feld1991}:%
\begin{equation}
\mu_{G}=%
\dfrac{\sum_{i \in V}{\left[ \sum_{j\in N(i)}D_j \right]}}{\sum_{i \in V}D_i} 
\end{equation}
The above means arise from differently weighting the average degree across friends.
Both means above are consistent with the notion of ``average number of friends
of friends,'' although they are distinct network properties (see Fig. \ref{figure:localglobalexplanation} for an example and detailed explanation).
The global mean was theoretically investigated earlier and found to be greater than the average degree and is independent of the local structure of connections, given node degrees (Theorem \ref{theorem:globalmean}). In contrast, local mean has distinct properties that depend on local network structure (i.e. who is connected to whom). However, it does share the property with the global mean that it is greater than the mean degree (Theorem \ref{theorem:localmean}).

\renewcommand{\baselinestretch}{1}
\begin{sidewaysfigure}
\includegraphics[scale=0.7]{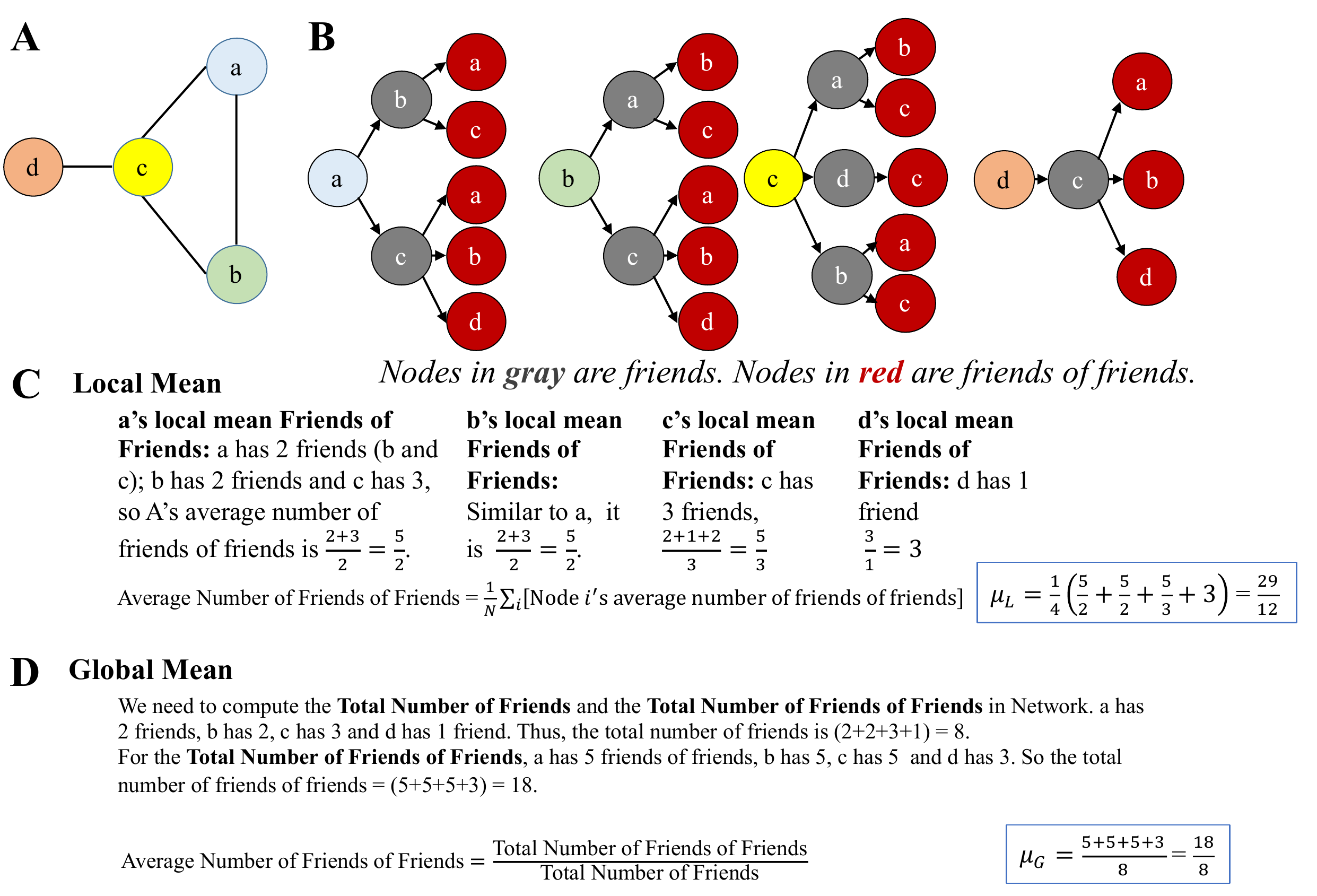}
\centering{}
\caption{\fcaptionsize{Local and Global Means in Example Network.
\textbf{(A) Network.} Example network with 4 nodes $a$,$b$,$c$ and $d$. 
\textbf{(B) Illustration of Friends and Friends of Friends.}  Each node is mapped out with its friends and friends of friends. The node is in light blue color, Friends are in gray color and Friends of Friends are in red color.  Node $a$ has 2 friends $b$ and $c$. Node $a$ also has 5 friends of friends. 
\textbf{(C) Local Mean.}  $a$ has 2 friends, $b$ and $c$.  
The total number of friends of friends in $a$'s network is 5: ($a$'s friend $b$ has 2 friends, $a$ and $c$; $a$'s friend $c$ has 3 friends, $a$, $b$ and $d$).  So average number of friends of friends for $a$ is the ratio of the number of nodes in black to the number of nodes in red, i.e. $\frac{2+3}{2}=\frac{5}{2}$. Similarly, for the other nodes, we have $b: \frac{2+3}{2}=\frac{5}{2}$, $c: \frac{2+1+2}{3}=\frac{5}{3}$, $d: \frac{3}{1}=3$. The local mean of the network is the mean of these local average friends of friends, so $\mu_L=\frac{1}{4}\left(\frac{5}{2}+\frac{5}{3}+\frac{1}{4}+\frac{3}{1}\right)=2.42$. \textbf{(D) Global Mean.} Global mean is the ratio of the total number of friends of friends to the total number of friends. The total number of friends of friends contributed by $a$ is 5. Similarly, $b$ contributes 5, $c$ contributes 5 and $d$ contributes 3 friends of friends.
Thus, the total number of friends of friends (i.e. the nodes in red color) are $(5+5+5+3)=18$.
The total number of friends is represented by the nodes in gray, $(2+2+3+1)=8$.
}}
\label{figure:localglobalexplanation}
\end{sidewaysfigure}
\renewcommand{\baselinestretch}{2}

The global mean is greater when there is higher \textit{variation across nodes} in terms of degree (variance), whereas local mean is greater when we have higher \textit{variation across edges}, i.e. when edges connect nodes of very dissimilar degree (e.g. with a hub and spoke network). More specifically, the global mean  is invariant to rewiring the network while keeping the degree distribution the same, whereas the local mean is impacted by the rewiring (Theorem \ref{theorem:rewire}). We term these means local or global since the former depends on the local structure (who is connected to whom), whereas the latter only depends on the global network properties (degree distribution).

We identify network structures that result in a greater divergence between these means and the average degree (Figure \ref{figure:two-by-two} in \S \ref{section:slocalglobalmeans}). We also examine a number of questions about the relationship between the local and global mean, including whether one of the means is always greater than the other, whether they exhibit correlated variation away from the mean degree (\S \ref{section:slocalglobalmeans}).

\section{Intervention Strategies\label{section:Intervention-Strategies}}
The above formulation of local and global mean suggests distinct intervention strategies. We illustrate \textit{random}, \textit{local} and \textit{global} strategies  to choose a ``seed'' node in the network beginning with an initial randomly chosen node (Table  \ref{table:strategyimplementation}). Observe that with the local strategy, the number of seed nodes is fixed, whereas it is probabilistic under the global strategy. For the local strategy, by construction, the expected degree of the obtained seed is equal to the local mean. For the global strategy, we prove that the expected degree of chosen nodes is equal to the global mean (Theorem \ref{theorem:globalintervention}).

Though the local and global strategies appear to be similar in the
sense that we are choosing friends of randomly chosen individuals,
the crucial distinction lies in whether we are choosing \emph{one}
random friend or whether we are choosing among \emph{each}
 friend. Table \ref{table:strategyimplementation} details the algorithms to obtain $k$ seeds in a network of size $N \gg k$. This impacts their relative effectiveness as examined in \S \ref{section:leverage} and \S\ref{section:application}.

\renewcommand{\baselinestretch}{1} 
\begin{table}[H]%
\caption{Implementation of Seeding Strategies}\label{table:strategyimplementation}
\parbox{0.60\textwidth}{
\begin{footnotesize}
\begin{tabular}{p{0.02\textwidth}p{0.02\textwidth}p{0.56\textwidth}}
\toprule
\textbf{Step}  & & \textbf{Details} \tabularnewline
\midrule
\textbf{0} &  & Fix $p \in (0,1]$ (only used for Global strategy in Step \textbf{2G}).
\tabularnewline
\midrule
\multicolumn{3}{l}{Repeat Steps 1-2 below until at least $k$ seeds are present in the seed set $\mathcal{S}$.}
\tabularnewline
\tabularnewline
\textbf{1} &  & Draw a random node $r$ uniformly from set of nodes, $V$.
\tabularnewline
 &  &In Example Network, Nodes $10$, $18$ and $12$ (in black) are drawn for (R), (L) and (G) strategies respectively.
\tabularnewline
\textbf{2} && Depending on the strategy Random (R), Local (L) or Global (G), do the following:
\tabularnewline
&\textbf{2R} & (\textit{Random}): Add $r$ to the seed set $\mathcal{S}$.
\tabularnewline
&&
\textcolor{darkgray}{In Example Network, add node $10$ to the seed set.}
\tabularnewline
&\textbf{2L} & (\textit{Local}): Obtain a node $s$ chosen with uniform probability from $r$'s friends, i.e. $s \in \mathcal{N}_r$.
\textcolor{darkgray}{Add the friend $s$ to the seed set $\mathcal{S}$.}
\tabularnewline
&&
\textcolor{red}{In Example Network, one of node $18$'s friends, node $14$ (in red), is chosen at random. Add node $14$ to the seed set.}
\tabularnewline
&\textbf{2G} & (\textit{Global}): For each of $r$'s friends, $s \in \mathcal{N}_r$: 
With probability $p$ ($0 < p \le 1$), add $s$ to the seed set $\mathcal{S}$.
\tabularnewline
&&
\textcolor{OliveGreen}{In Example Network, each of node $12$'s friends, nodes $1$, $2$, $8$ and $9$ (in green), are added probabilistically (with probability $p$) to the seed set.}
\tabularnewline

&&\textit{Implementation:} For each $s \in \mathcal{N}_r$, draw from an independent uniformly distributed random variable $z_s \sim U[0,1]$. If $z_s < p$, add $s$ to the seed set $\mathcal{S}$.
\tabularnewline
\midrule
\multicolumn{3}{p{\linewidth}}{Note: With Random and Local strategies, we will obtain exactly $k$ nodes in the seed set $\mathcal{S}$. With the global strategy we might obtain more than $k$ nodes in the seed set. In such a case, we select $k$ nodes at random from the seed set $\mathcal{S}$ without replacement.}
\tabularnewline
\bottomrule
\end{tabular}
\end{footnotesize}
}
\hspace{0.05in}
\qquad
\begin{minipage}[c]{0.35\textwidth}%
\centering
\vspace{-0.25in}
    \includegraphics[width=1.1\textwidth]{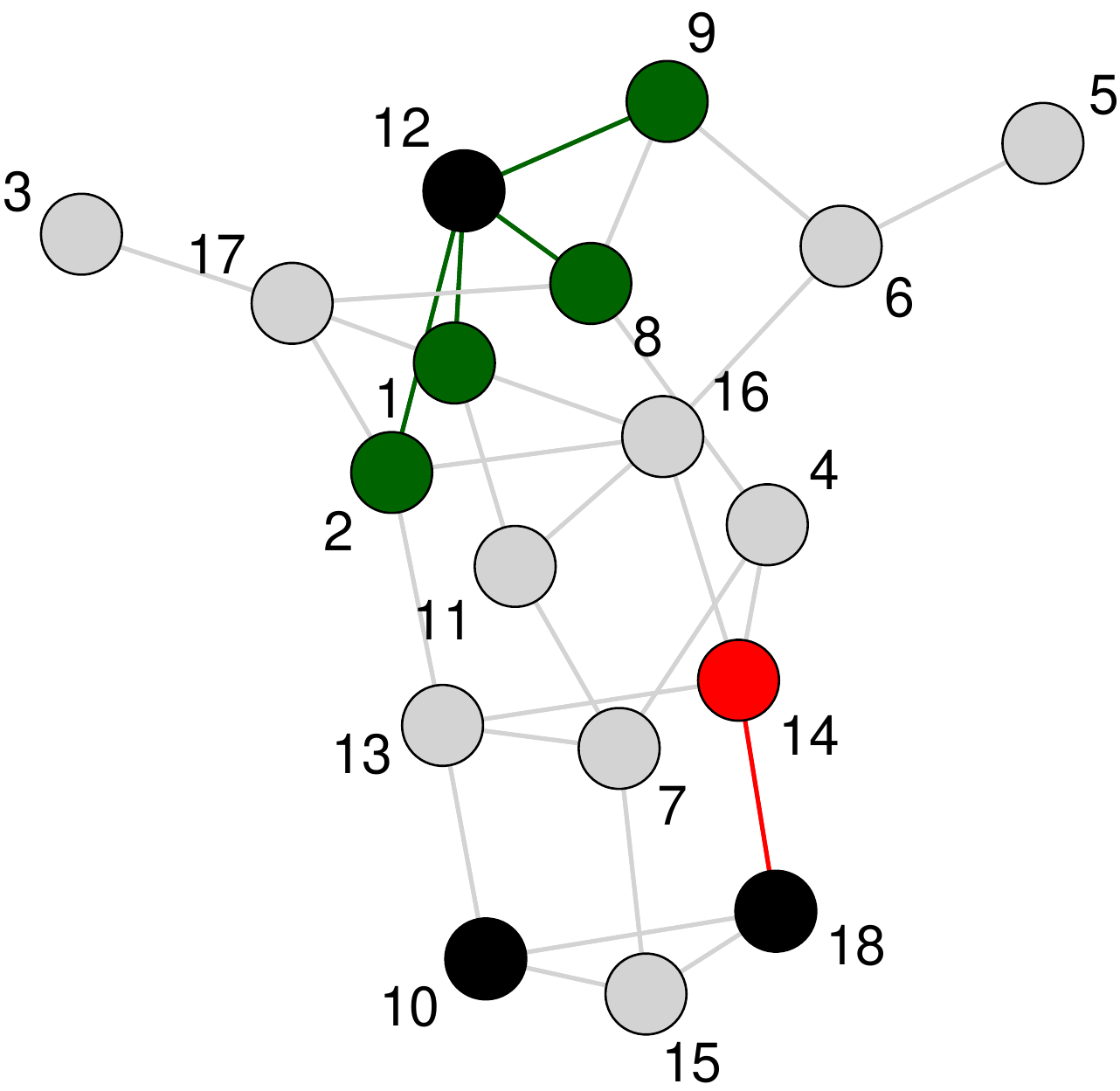}
\caption*{Example Network}
\label{fig:figure}
\end{minipage}
\tabularnewline

\end{table}

\renewcommand{\baselinestretch}{2}

\section{Inversity: Connecting Local and Global Means}\label{section:Inversity}
Since both local and global strategies can be used for interventions, we next characterize their relative effectiveness. We identify and define a novel network property, Inversity, that determines when the local mean is greater than the global mean. This property captures all local network information related to the local mean and is scale-invariant, i.e. independent of the size or density of the network. We find that the sign of inversity helps us determine which of the local mean or the global mean is higher for any given network. We show how inversity is related to but distinct from degree assortativity (in Supplement \S \ref{figure:inversityassortativity}).

Inversity is a correlation-based metric that
relates the global and local means for any network is obtained as follows.
First, define the following edge-based distributions to examine
the relationship between the means. The \emph{origin} degree (\textbf{O}),
$D^{O}(e)$, \emph{destination} degree (\textbf{D}), $D^{D}(e)$,
and \emph{inverse destination} degree (\textbf{ID}) distribution,
$D^{ID}(e)$, are defined across directed edges $e\in\hat{E}$
as:
$ D^{O}(e_{jk})=D_{j}, D^{D}(e_{jk})=D_{k}, D^{ID}(e_{jk})=\frac{1}{D_{k}}
$. 
We define the \emph{inversity} across the edge distribution
as the Pearson correlation across the origin and inverse degree distributions.
\begin{equation}
\rho=Corr\left(D^{\mathbf{O}},D^{\mathbf{ID}}\right)%
\end{equation}
We can then connect (see Theorem \ref{theorem:inversity}) the local and global means with \textit{inversity} and the degree distribution $\left(\kappa_m=\sum_{i \in V} D_i^m\right)$ as:
\begin{equation}
\mu_{L}   =  \mu_{G}+\rho \  \Psi (\kappa_{-1},\kappa_{1},\kappa_{2},\kappa_{3})
\end{equation}
where $\Psi$ is a positive function of the degree distribution.

Therefore, if inversity is known, we don't need the entire degree distribution to obtain the local mean. Rather, \emph{four} moments
of the degree distribution are sufficient for that purpose. %
Inversity captures the local information on imbalances in degree of nodes across edges, whereas the moments of the degree distribution represent global information about the network. Inversity $\rho$ has a critical role in determining whether the local or global mean is larger for a network; specifically,
 $\rho<0$ indicates the global mean is higher than the local mean, whereas $\rho>0$ indicates the reverse,  implying that knowing inversity can help us determine which strategy to use. Even computing inversity is information-light, requiring only the $2k$ distribution, which represents the degrees of nodes at the termini of each edge, rather than the entire network \cite{orsini2015quantifying}.

\section{Effectiveness of Strategies: Leverage}\label{section:leverage}
To identify how much of an improvement over the random strategy is possible, and how this varies across a variety of generated and real networks, we examine the relative effectiveness of strategies, with the random strategy as the baseline and characterize leverage as the improvement another strategy can obtain in terms of expected degree.
Leverage for strategy $s$ on network $\mathcal{G}$ is defined as $\lambda_s(\mathcal{G}) = \frac{\mu_s(\mathcal{G})}{\mu_D(\mathcal{G})}$ for $s\in\{L,G\}$ (since the random strategy obtains the mean degree in expectation, the leverage for $R$ is $1$ and it serves as a baseline). A star (or hub-spoke) network obtains the highest possible leverage (see Theorem \ref{theorem:leveragemax}).

\paragraph{Generated Networks:} \label{sub:strategies-generated} 
Networks generated from a number of commonly used generative mechanisms are used to assess a number of structural features with regard to the friendship paradox. We examine 3 generative mechanisms for networks: (a) Erdos-Renyi (ER) \cite{erdos1959random}, (b) Scale Free (SF) \cite{barabasi1999emergence} and (c) Small World (SW)  (Fig. \ref{figure:leveragegeneratednetworks}) \cite{watts1998collective}.

We find that for ER networks, at very low density (edge probability), the leverage is very low because most edges connect nodes that have a degree of 1. As density increases, we obtain more variation in degrees, and local leverage increases. However, beyond an edge probability of $p=0.05$, leverage decreases as the density of the network decreases. Local leverage thus forms a non-monotonic pattern with ER networks. For SF networks, rather than density or edge probability, we initially examine leverage as  the network becomes more centralized (as $\gamma$ increases above 1, very high degree nodes have a lower probability of occurring). We find that as $\gamma$ increases from 1 to 2, the leverage increases, but then decreases beyond 2. For SW networks, unlike in the ER and SF networks, leverage is monotonically decreasing with number of neighbors (or density), and is monotonically increasing with rewiring probability. In addition, in Figure \ref{figure:leveragegeneratedsize} (\S \ref{section:leveragenetworksize2}), we show how leverage varies with network size, and find that larger networks typically obtain higher leverage for scale-free (SF).

\renewcommand{\baselinestretch}{1} 
\begin{figure}
\begin{centering}
\includegraphics[width=0.9\textwidth]{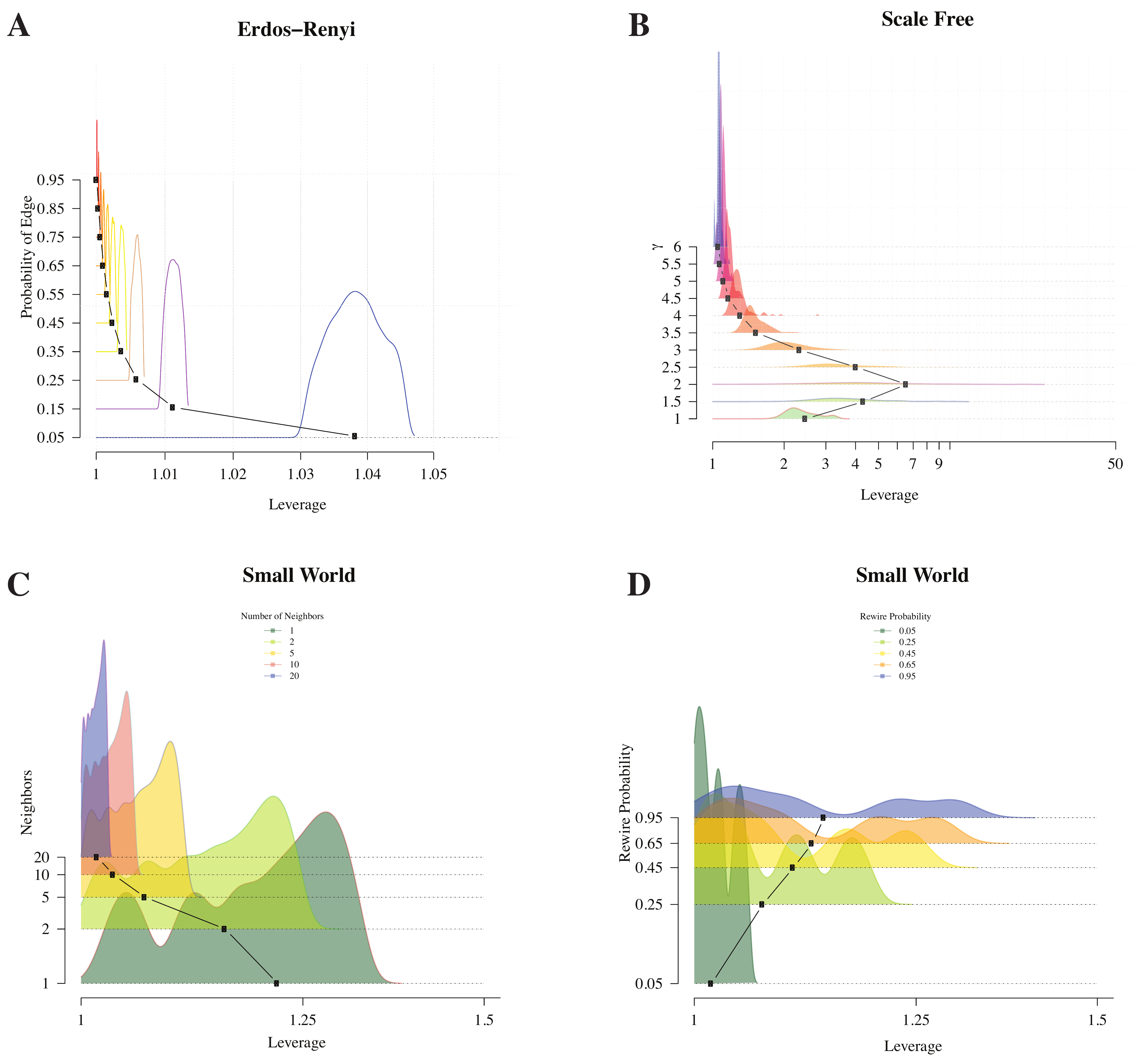}
\par\end{centering}
\caption{\fcaptionsize{
Local Leverage Density in Generated Networks.
To examine whether different generative models result in more or less leverage for the friendship paradox strategies, we examine networks from 3 generative processes.
A sample of 1,000 networks was used for each of the models. (A) Erdos-Renyi (ER) networks generated with edge probabilities, $p \in [0.05,0.95]$, and size ranging from N=50 to N=1000 nodes. We find that local leverage is highest for the lowest edge probabilities, and  leverage converges to 1 as the networks become more dense. \textbf{Overall, ER networks do not achieve high leverage with local and global strategies.}
(B) Static Scale Free (SF) or Barabasi-Albert networks with scale-free parameter $\gamma \in [1,6]$. For these networks, observe that the leverage spans a wider range, e.g. for $\gamma=2$, the samples range from leverage of 1 to over 40. The mean leverage is non-monotonic in terms of $\gamma$, increasing when $\gamma<2$ and decreasing for $\gamma>2$. The distribution of leverage across the samples also displays decreasing variance when $\gamma>2$. At very high levels of $\gamma \approx 6$, the local mean converges to the mean degree. \textbf{Overall, we find that SF networks do achieve high leverage with local and global strategies, and intermediate levels of the gamma parameter obtain highest leverage.}
With small world (SW) or Watts-Strogatz networks, we have two parameters. First is the number of neighbors each node is connected to initially, $n$. The edges are then rewired with a specified probability, $p_r$. First, in panel (C), we find that with a small number of neighbors, the leverage distribution is quite spread out, and there is a substantial leverage effect. However, as we begin to create very dense networks, both the mean and the variance of the leverage distribution leverage diminish substantially. Second, we examine the impact of rewiring probability on the leverage distribution in panel (D). We find that with lower rewiring probabilities, say $p_r=0.05$, the leverage distribution is closer to $1$, whereas with a higher rewiring probabilities, the distributions feature increased variance as well as higher mean leverage.
\textbf{Overall, SW networks result in moderate levels of leverage for local and global strategies.}}}
\label{figure:leveragegeneratednetworks}
\end{figure}
\renewcommand{\baselinestretch}{2}

\paragraph{Real Networks:} \label{sub:strategies-real} We examine the range of real networks detailed in \S \ref{ssection:realnetworks}.
First, observing the local strategy (Fig. \ref{figure:realleverage}A), we find that for all networks, as expected, the friendship paradox strategies are at least as good as the random strategy. Second, for networks like  Twitter (OS4)  or Internet Topology (C1), the leverage can be as high as 100. Thus,  obtaining a friend of a random node will provide a 100-fold increase in the expected degree of a chosen node. Third, we observe that both local and global leverage (Figs. \ref{figure:realleverage}A and \ref{figure:realleverage}B) are higher for nodes when average degree is intermediate, i.e. not too low or high. Some networks like the CA Roads network (I3) have very little degree variation and local and global strategies are relatively less effective. Finally, we examine when local and global strategies make a relative difference (Fig. \ref{figure:realleverage}B). We find that the highest ratio of local to global mean is for Twitter network (OS4), whereas the lowest ratio (indicating that global strategy has a higher expected mean degree) is shown by  Flickr (OS2), both of which belong to the same category of online social networks. Citation networks tend to have higher global mean, whereas for Infrastructure networks, both strategies seem to work just as well.

\section{Application: Controlling Contagion in Networks}\label{section:application}
We demonstrate an application comparing different strategies to control simple contagion spreading through a network. There are a number of models of contagion, and they can be parametrized several ways. However, remarkably most models of contagion can be characterized by a single parameter termed the \textit{epidemic threshold}. If the ratio of infection to that of recovery is lower than the epidemic threshold, then the epidemic is contained and will die out, whereas if the ratio is above the threshold, then it could turn into an epidemic. The epidemic threshold is shown to be a function of both the network and the \textit{virus propagation model} (VPM). The epidemic threshold  of a network is characterized as the inverse of the greatest (first) eigenvalue of the adjacency matrix $A$ of the network, denoted as below (details in \S \ref{ssection:vpm}): 
$$\boxed{\tau(A) = \dfrac{1}{\lambda_1(A)}}$$.

For virtually any VPM, networks with higher epidemic thresholds are less likely to have an epidemic outbreak. The threshold in an undirected network is shown to be proportional to the \textit{inverse of the largest eigenvalue} for a wide range of VPMs, including SIR, SEIR, etc. models that have been commonly used for modeling infectious diseases \cite{prakash2010got}. Nodes are selected for immunization or treatment using each of the intervention strategies (random, local and global). We then examine how the epidemic threshold changes as a function of the proportion of nodes vaccinated (removed), for each strategy.

We examine data in the India villages networks  from \cite{Banerjee1236498}, who collected detailed full census data on the social networks of 75 villages in southern India. The social networks are captured at two different levels of aggregation, at the level of individuals and of households.  %
Details of the network dataset are provided in \S \ref{ssection:realnetworks}.

We find that networks can have either positive or negative inversity depending on how nodes and edges are defined. When nodes as defined as individuals, we find that the networks have negative inversity, whereas if the nodes are defined as households, the inversity values of the resulting networks are mostly positive (Fig. \ref{figure:inversityindiavillages}). Thus, a household-based intervention might use the local strategy, and the individual-based intervention might use the global strategy.

Figure \ref{figure:inversityindiavillages} illustrates the inversity values across the 75 villages separately for individual and household networks. Overall, we find that networks obtained from similar underlying relationships can result in dramatically different inversity characteristics, implying different interventions (local or global) would be better suited. 
\renewcommand{\baselinestretch}{1} 
\begin{figure}
\begin{centering}
\includegraphics[scale=0.6]{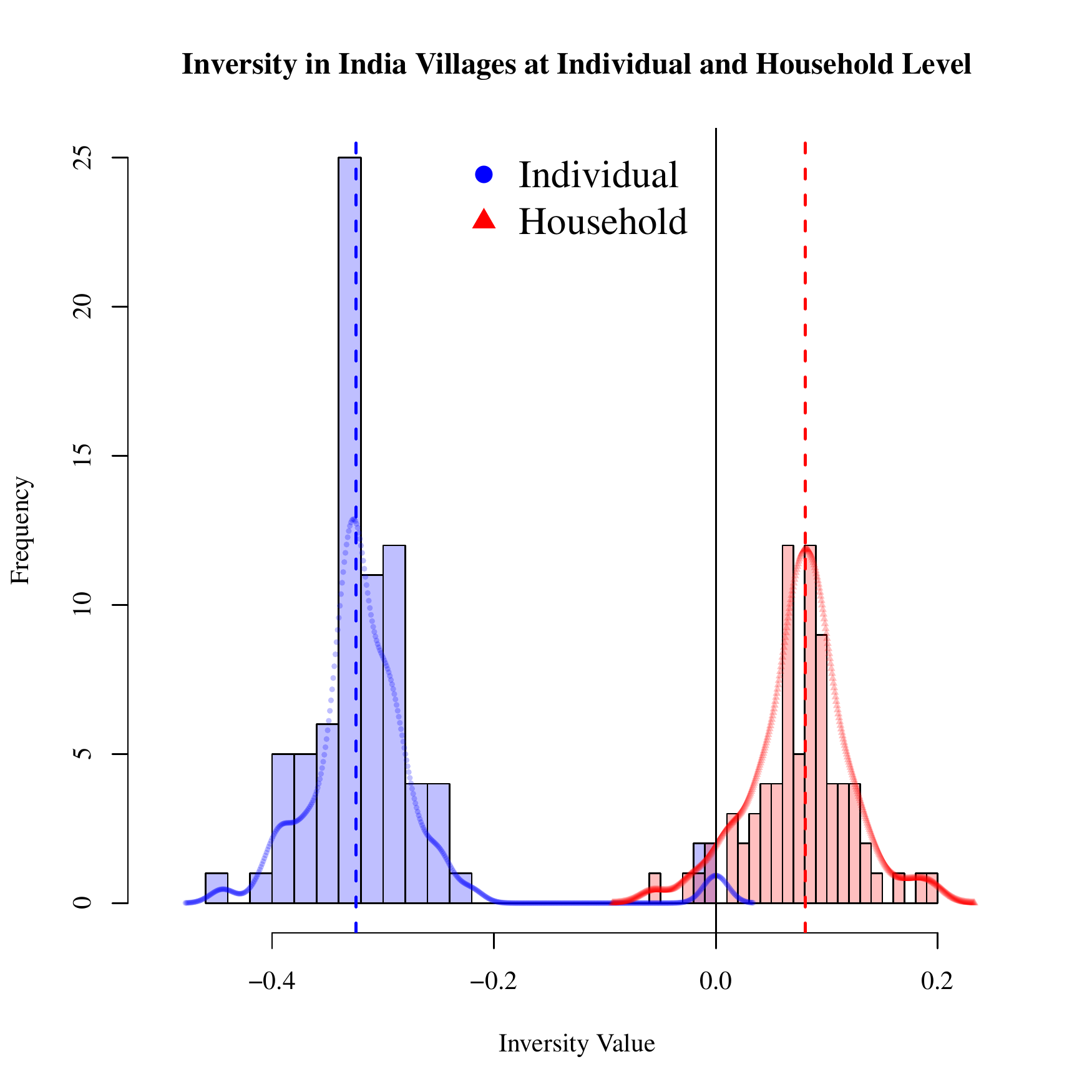}
\par\end{centering}
\caption{Inversity in India Village Networks. Inversity values depend strongly on how the network structure is aggregated. \textbf{We observe  negative inversity values across most of the networks when considering individual-to-individual ties, but  positive inversity values when we consider household-to-household ties.} Note that the household-level ties are aggregated from the individual-level ties.}
\label{figure:inversityindiavillages}
\end{figure}
\renewcommand{\baselinestretch}{2}

\paragraph{Epidemic Threshold and Immunization Strategies:}
Our first goal is to identify how the epidemic threshold $\tau$ changes as we immunize nodes from the network $\mathcal{G}$. While immunizing (or removing)  any node from the network is likely to increase the epidemic threshold, immunizing well-connected nodes is likely to prove especially beneficial. We examine the effectiveness of the three strategies (Random, Local and Global) in identifying which nodes to immunize from the network.

We evaluate the impact of the immunization strategies on the epidemic threshold of a number of real network datasets.
First, we examine the data from $N=75$ village social networks in India  (see \cite{Banerjee1236498}). This dataset is especially useful in our analysis since the villages are relatively isolated, implying they can be evaluated separately. %

In Figure \ref{figure:epidemicthresholdvillages}, we evaluate the eigen threshold ($\tau$) for the networks in the Indian villages data set. First, for the household networks, we find that the local and global strategies obtain a significantly higher epidemic threshold. The difference in threshold between the random strategies and the friendship paradox strategies increases with the proportion of nodes immunized. For individual networks, a similar pattern obtains, but here we find that the global strategy obtains the highest epidemic threshold across all immunization levels, and the difference in thresholds between local and global strategies also increases with the proportion of immunized nodes. This broadly signifies that it is helpful to know which among the global or local strategies to use, and the sign of inversity helps us in making this decision.

\newpage
\renewcommand{\baselinestretch}{1} 
\begin{sidewaysfigure}[!htb]
\includegraphics[width=0.5\textwidth]{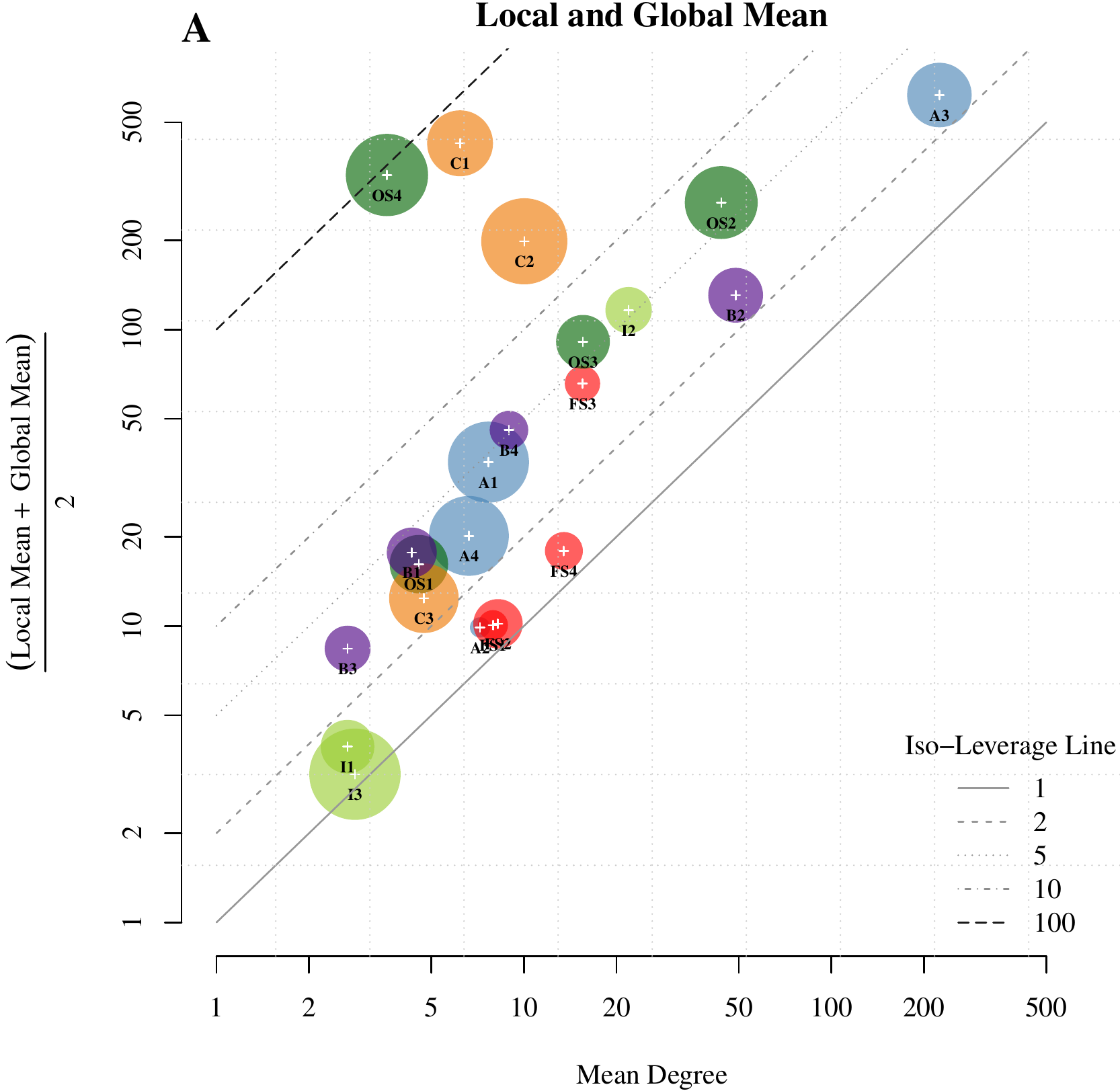} 
\includegraphics[width=0.5\textwidth]{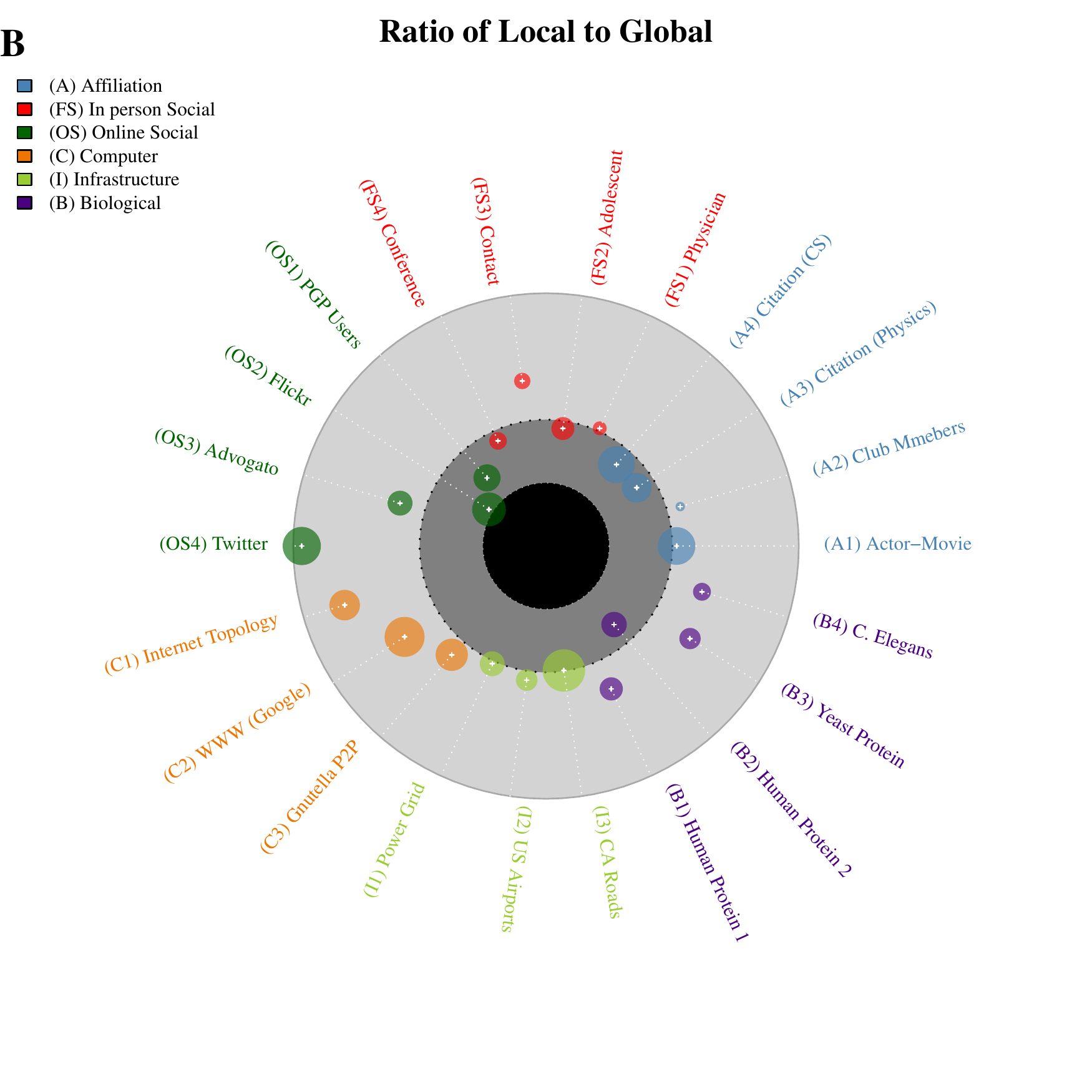}
\caption{
\fcaptionsize{
Global and Local Leverage in Real Networks}  Local and Global Means across Networks (each circle is a network).
Area of circles indicates size of networks (number of nodes) in log scale. Color of circle indicates network category.
(A) The average of \textbf{Local and Global Mean}  is higher than mean degree in all real networks, with the highest differences occurring in online social networks and computer networks. Most large networks also tend to show a higher leverage ratio. For in person or face to face networks, the pattern is more variable. The iso-leverage line indicates leverage levels of 1,2,5,10 and 100. We find that all networks have leverage greater than 1, a majority of networks have leverage greater than 5, and 2 networks have leverage close to 100.
(B) \textbf{Comparison}: Ratio of Local to Global Mean. The ratio of local to global mean $\frac{\mu_L}{\mu_G}$ is represented as follows ($<\frac{1}{2}$ in black circle, $\frac{1}{2}<\frac{\mu_L}{\mu_G}<1$ in dark gray circle and $1<\frac{\mu_L}{\mu_G}<2$ in light gray circle. For example, in the Twitter network, local mean is almost twice the global mean, whereas in the Flickr network, global mean is almost twice the local mean. Computer networks have higher values of the ratio, whereas Infrastructure networks have similar values of local and global means.}
\label{figure:realleverage}
\end{sidewaysfigure}
\clearpage

\renewcommand{\baselinestretch}{1}
\begin{figure}[H]
\includegraphics[scale=1]{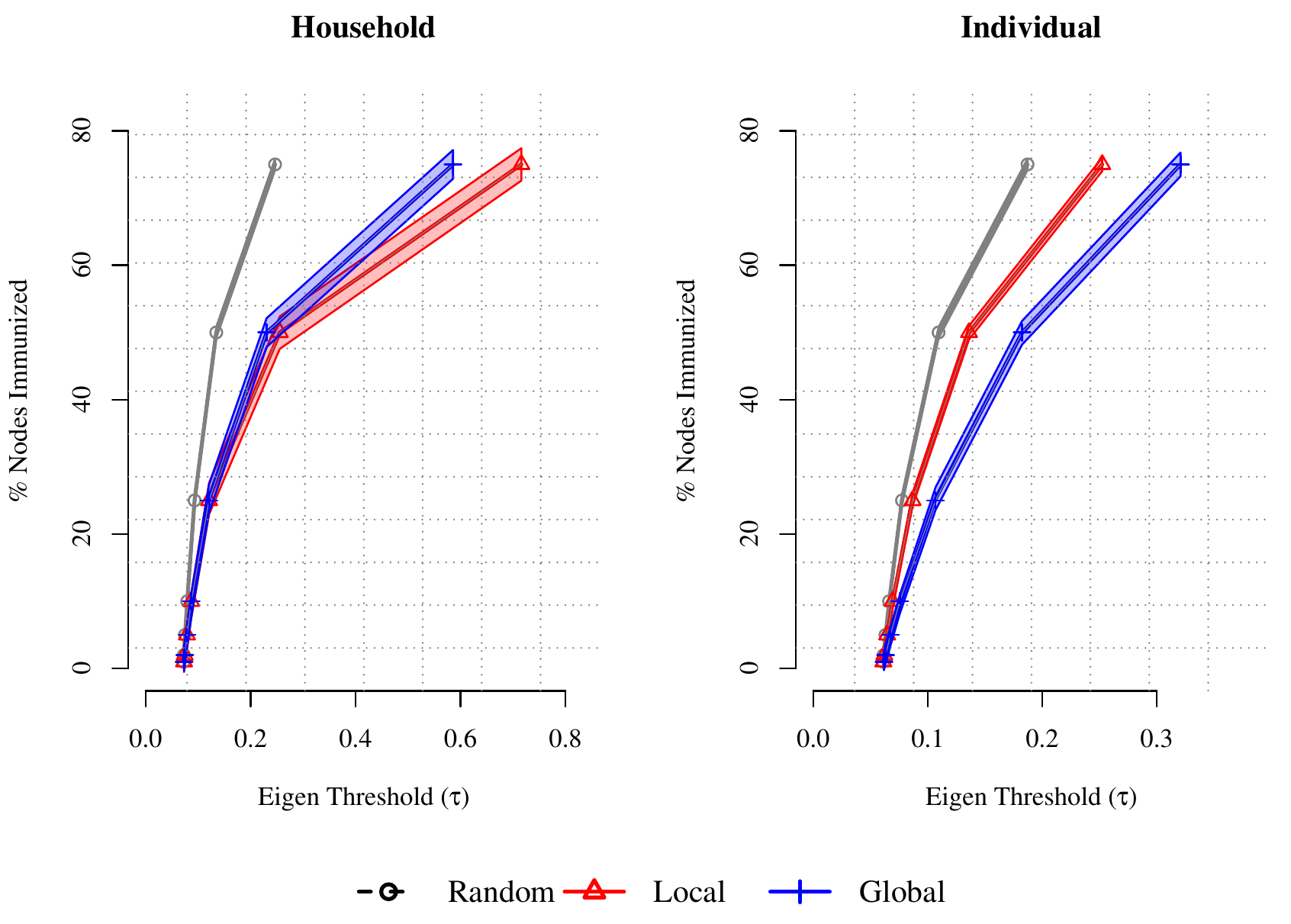}
\begin{centering}
\caption{Epidemic Thresholds with Immunization in India Village Networks. Higher thresholds imply an outbreak is more likely to die out. The dark lines represent the mean values, and the shaded regions are the 95\% confidence intervals. We examine 3 strategies (Random, Local and Global) to choose nodes to immunize. The proportion of nodes immunized ranges from 1\% - 75\%. In both household and individual networks, we find that the friendship paradox strategies obtain higher thresholds than random, for the same proportion of nodes immunized. \textbf{For instance, in the household networks, to achieve a threshold $\tau=0.15$, the random strategy needs to have about 50\% of nodes immunized, but the local and global strategies require less than half of that, at around 25\%.}
For the household networks (left panel), we find that the local strategy is better than the global strategies especially at higher levels of removal. However, for individual networks, we find that the global strategy obtains greater thresholds than local.
}
\label{figure:epidemicthresholdvillages}
\end{centering}
\end{figure}
\renewcommand{\baselinestretch}{2}
\clearpage

\paragraph{Epidemic Outcomes and Immunization Strategies:}
We deploy the epidemic propagation models on 75 village networks from India using an SIR virus propagation model (details in \S \ref{ssection:vpmimplementation}). Define $I_{it}\in\{0,1\}$ as an indicator of whether an individual $i$ is infected at time $t$. We evaluate epidemics on the following aspects:
\begin{itemize}
\item \textbf{Proportion Infected at Peak} =$ \frac{1}{N}\max_t(\sum_i I_{it})$: Since epidemics increase in intensity and eventually die down, an important characteristic is to measure the proportion of the population who are infected at the peak of the epidemic. This directly impacts important decisions like hospital capacity planning etc.
\item \textbf{Proportion Ever Infected} =  $ \frac{1}{N}\sum_i \max_t(I_{it})$: The proportion of the population that was ever infected by the disease is important since  it represents the total spread of the disease in the population. It could also represent the number of people who might have immunity to future recurrences of the disease.%
\item \textbf{Total Suffering}: $ \frac{1}{N T}\sum_i \sum_t(I_{it})$ Here, the total suffering metric captures not just how many infections occur, but also the length of the infections. This represents the proportion of individual-period combinations with an infection.
\end{itemize}

In Figure \ref{figure:epidemicindiavillages}, we evaluate epidemic outcomes using the networks of Indian villages. First, for both household and individual level networks, we find that  strategies based on the friendship paradox, i.e. the global and local strategies perform better than the random strategy. Second, for household networks, the local strategy performs relatively better than the global strategy for each of the epidemic characteristics detailed above. In contrast, for individual-level village networks, we find that the global strategy diminishes the severity of epidemic spread as measured by each of the above characteristics to a greater extent. 
Thus, while it may be beneficial to use either strategy, understanding the role of inversity (as in Figure \ref{figure:inversityindiavillages}) helps determine which of the friendship paradox strategies, i.e. local or global ould result in better epidemic outcomes.
We also examine the outcomes for a network of Facebook users, and find that the local strategy achieves better outcomes on all of the above metrics (see Fig. \ref{figure:epidemicfb} in \S \ref{ssection:fbepidemic}).

\renewcommand{\baselinestretch}{1}
\begin{figure}
\begin{centering}
\includegraphics[scale=0.8]{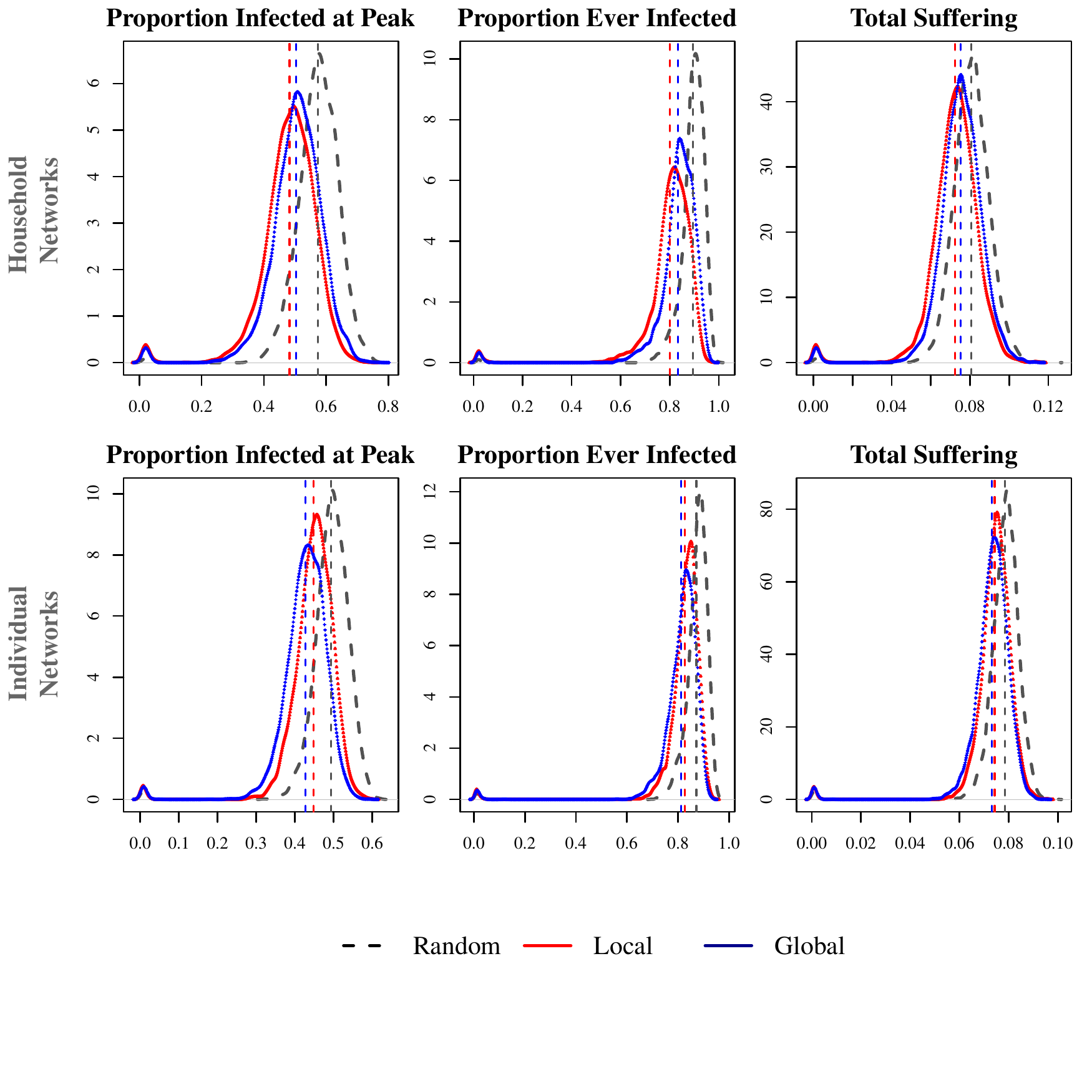}
\vspace{-0.7in}
\end{centering}
\begin{centering}
\caption{Epidemic Characteristics with Immunization in India Village Networks. See Table {table:vpmsimulation} for parameters of simulation. The top 3 panels represent outcomes for household networks, and the bottom 3 panels for individual-level networks. All outcomes are density plots. We plot 3 outcomes: (a) the proportion of population infected at the peak, (b) proportion of population that was ever infected, and (c) total suffering. The $x$-axis represent proportions and the $y$-axis represent density.  We plot the outcomes for 3 strategies: (R)andom, (L)ocal and (G)lobal. The dashed vertical lines represent the means for the 3 strategies. A strategy with a density plot  to the left of another is ``better'' in terms of reducing the severity of the epidemic. \textbf{Thus, for household networks, the local strategy (in red) is better than the global, which in turn is better than the random strategy. This ordering is the same for all 3 outcomes. For the individual networks, however, the global strategy is ``bettter,''  for all 3 outcomes.}}
\label{figure:epidemicindiavillages}
\end{centering}
\end{figure}
\renewcommand{\baselinestretch}{2}
\clearpage

\section{Conclusion}\label{section:conclusion}
We show that with unknown networks, the friendship paradox can be leveraged to obtain such individuals with minimal informational requirements. We identify intervention strategies (local and global) that have theoretical guarantees on obtaining better-connected individuals. With both generated random networks and real networks, our results show the value of using the local and global strategies to obtain highly connected nodes. In the vast majority of networks, we obtain at least double the average degree, and some networks show increases of several hundred-folds in node degree. We expect the advantages of speed of implementation, generality of application areas for these privacy-sensitive and informationally-light strategies to provide an important tool for network interventions in unknown structures.

\bibliography{structuralleverage}
\bibliographystyle{unsrt}

\clearpage

\newpage

\newcommand{\beginsupplement}{%
        \setcounter{table}{0}
        \renewcommand{\thetable}{S\arabic{table}}%
        \setcounter{figure}{0}
        \renewcommand{\thefigure}{S\arabic{figure}}%
	\setcounter{section}{0}
        \renewcommand{\thesection}{S.\Alph{section}}%
	\setcounter{theorem}{0}
        \renewcommand{\thetheorem}{S\arabic{theorem}}%
	\setcounter{lemma}{0}
        \renewcommand{\thelemma}{S\arabic{lemma}}%
}
\beginsupplement
\renewcommand{\baselinestretch}{2} 

\setcounter{page}{1}
\section*{Supplementary Material}

\section{Mathematical Appendix}\label{section:mathproofs}
Formally, the network graph $\mathcal{G}=\left(V,E\right)$ is comprised of
a set of $N$ individual nodes and a set of undirected edges $E$.
Each element of $E$ is a pair of nodes, $(i,j)$ indicates an edge
(connection) with $e_{ij}\in\left\{ 0,1\right\} $. We also define
the directed edge set $\hat{E}$ including both $(i,j)$ and $(j,i)$
as distinct elements of $\hat{E}$ corresponding to an undirected
edge $i\leftrightarrow j$. We detail the table of notation in Table \ref{table:notation-1}.
\begin{table}[h]
\caption{Table of Notation}
\label{table:notation-1}%
\begin{tabular}{ccc}
\hline 
Symbol & Term & Definition \tabularnewline
\hline 
$\mathcal{G}$,$V$,$E$ & Network & Network Graph of Nodes $V$ and Edges $E$\tabularnewline
$\hat{E}$ & Directed Edge Set & Each edge in $E$ is replaced by two directed edges\tabularnewline
$D_{i}$ & Degree & Number of friends of $i$\tabularnewline
$\mathscr{N}(i)$ & Neighbors & \emph{Set} of Friends of $i$\tabularnewline
$F_{i}$ & Average degree of friends of $i$ & $\frac{1}{D_{i}}\sum_{j\in\mathscr{N}(i)}D_{j}$\tabularnewline
$\mu_{D},\sigma_{D}^{2}$ & Mean and variance of Degrees & $\frac{1}{N}\sum_{i}D_{i},\ \frac{1}{N}\sum_{i}\left(D_{i}-\mu_{D}\right)^{2}$\tabularnewline
$\mu_{L}$ & Local Mean & $\frac{1}{N}\sum_{i}F_{i}$\tabularnewline
$\mu_{G}$ & Global Mean & $\frac{\sum_{i}D_{i}F_{i}}{\sum_{i}D_{i}}$\tabularnewline
$\rho$ & Inversity & $Corr\left(D_{i},\frac{1}{D_{j}}\right)\forall(i,j)\in\hat{E}$\tabularnewline
\hline 
\end{tabular}
\end{table}

The basic idea of the friendship paradox can be expressed as "your friends have more friends than you." We examine the degree to which the friendship paradox holds for individual nodes, or the individual friendship paradox. We find in the result below that it cannot hold for all nodes, but can hold for an arbitrarily high proportion ($<1$) of nodes.
\begin{theorem}
\label{theorem:individualfp}
The friendship paradox statement that "your friends have more friends than you" cannot hold for all nodes in a network. Also, the statement can hold for all nodes, except one.
\end{theorem}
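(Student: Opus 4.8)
The plan is to treat the two claims separately, anchoring both on the behavior of the highest-degree node(s). Throughout I read ``your friends have more friends than you'' for a node $i$ with $D_i>0$ as the strict inequality $F_i > D_i$, where $F_i=\frac{1}{D_i}\sum_{j\in\mathscr{N}(i)}D_j$ is the average degree of $i$'s neighbors from Table~\ref{table:notation-1}. Strictness is essential here, since in any regular graph $F_i=D_i$ for every node, so the weak version $F_i\ge D_i$ can hold everywhere; committing to the strict reading is what makes the first claim nonvacuous.

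For the impossibility (first claim), I would let $\Delta=\max_i D_i$ and fix any node $v$ attaining it. Every neighbor $j\in\mathscr{N}(v)$ satisfies $D_j\le\Delta=D_v$, so
\begin{equation}
\sum_{j\in\mathscr{N}(v)}D_j \;\le\; D_v\cdot D_v \;=\; D_v^2 ,
\end{equation}
and dividing by $D_v>0$ gives $F_v\le D_v$. Hence the strict inequality $F_v>D_v$ fails at $v$: the paradox cannot hold at a maximum-degree node, and therefore cannot hold at all nodes simultaneously. Isolated nodes, for which $F_i$ is undefined, can be excluded or read as vacuous failures, which only reinforces the conclusion.

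For tightness (second claim), I would exhibit the star $K_{1,n-1}$ on $n\ge 3$ nodes: a center $c$ joined to $n-1$ leaves. Each leaf $\ell$ has $D_\ell=1$ and its unique neighbor is $c$ with $D_c=n-1$, so $F_\ell=n-1>1=D_\ell$ and every leaf satisfies the paradox. The center's neighbors are all leaves of degree $1$, so $F_c=1<n-1=D_c$, making the center the single exception. This realizes the paradox at exactly $n-1$ of the $n$ nodes, matching the upper bound implied by the first claim and establishing ``all nodes, except one.''

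The argument is short, so the only genuine care lies in pinning down the definition, and I would add one robustness remark: the maximum-degree argument works verbatim under the alternative reading ``every friend has more friends than you'' (i.e.\ $D_j>D_i$ for all $j\in\mathscr{N}(i)$), since a maximum-degree node admits no neighbor of strictly larger degree, and the star witness survives that reading as well. Thus the main obstacle is not any calculation but ensuring the interpretation is fixed so that the impossibility is not circumvented by regular graphs.
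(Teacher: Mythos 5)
Your proposal is correct and follows essentially the same route as the paper: both anchor the impossibility on a maximum-degree node and witness tightness with the star network. If anything, your version is slightly cleaner, since observing $F_v\le D_v$ directly at any maximum-degree node dispenses with the paper's restriction to connected networks and its separate case analysis for regular graphs (where $F_i=D_i$ everywhere), handling both uniformly under the strict reading you fix at the outset.
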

\begin{proof}
Consider a connected network where not all degrees are identical (if all are identical, the statement cannot hold). There must be at least one node that has the highest degree $D_{max}$ and which is connected to at least one node with a lower degree. If not, then the connected network is comprised entirely of highest (identical) degree nodes, thus contradicting the initial statement. If the highest degree node is connected to a lower degree node, then the average friends of friends of the highest degree node must be lower than $D_{max}$. Thus the statement cannot hold for \textit{all} nodes. To show the second part that it can hold for all nodes except one, consider the star  (hub and spoke) network, where all of the nodes except the central node have fewer friends than their friends do.
\end{proof}

\begin{theorem}\label{theorem:globalmean}
{[}Feld 1991{]} For a network $\mathcal{G}=\left(V,E\right)$ with degree mean
$\mu_{D}$ and variance $\sigma_{D}^{2}$, the global mean of friends
of friends is 
\begin{equation}
\mu_{G}=\left(\mu_{D}+\frac{\sigma_{D}^{2}}{\mu_{D}}\right)
\end{equation}
\end{theorem}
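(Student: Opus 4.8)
The plan is to evaluate the numerator and denominator of $\mu_G$ separately and then close the argument with the definition of variance. The denominator $\sum_{i\in V} D_i$ is immediate: it equals $N\mu_D$ directly from the definition of the mean degree $\mu_D = \frac{1}{N}\sum_i D_i$.

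The crux is the numerator $\sum_{i\in V}\sum_{j\in N(i)} D_j$. Here the key step is to switch the order of summation by counting, for each fixed node $j$, how many times its degree $D_j$ gets added. The term $D_j$ is contributed once for every node $i$ that has $j$ as a neighbor, and there are exactly $D_j$ such nodes. Hence $\sum_{i\in V}\sum_{j\in N(i)} D_j = \sum_{j\in V} D_j\cdot D_j = \sum_{j\in V} D_j^2$. Equivalently, one can phrase this over the directed edge set $\hat{E}$: the numerator is $\sum_{(i,j)\in\hat{E}} D_j$, and each node $j$ is the head of precisely $D_j$ directed edges, giving the same total $\sum_j D_j^2$.

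With these two pieces in hand, $\mu_G = \frac{\sum_{j} D_j^2}{N\mu_D}$. To finish I would invoke the variance identity $\sigma_D^2 = \frac{1}{N}\sum_{j} D_j^2 - \mu_D^2$, rearranged as $\sum_{j} D_j^2 = N\bigl(\mu_D^2 + \sigma_D^2\bigr)$. Substituting this into the expression for $\mu_G$ and cancelling the factor $N$ yields $\mu_G = \frac{\mu_D^2 + \sigma_D^2}{\mu_D} = \mu_D + \frac{\sigma_D^2}{\mu_D}$, which is exactly the claimed formula. (The division is well-defined since $\mu_D > 0$ for any network with at least one edge.)

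The main obstacle — indeed essentially the only non-mechanical step — is the reindexing of the double sum in the numerator. Once one recognizes the double-counting symmetry, namely that each degree $D_j$ is summed exactly $D_j$ times so that the numerator collapses to $\sum_j D_j^2$, the rest is a one-line application of the definition of variance. Everything after that reindexing is routine algebra.
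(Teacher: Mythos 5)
Your proof is correct and follows essentially the same route as the paper's: the paper likewise collapses the numerator $\sum_i \sum_j e_{ij} D_j$ to $\sum_i D_i^2$ (your reindexing step, since $\sum_i e_{ij} = D_j$) and then applies the variance identity $\sum_i D_i^2 = N\left(\mu_D^2 + \sigma_D^2\right)$ to conclude $\mu_G = \mu_D + \frac{\sigma_D^2}{\mu_D}$. Your version simply spells out the double-counting argument that the paper's one-line proof leaves implicit.
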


\begin{proof}\emph{(as given in Feld, 1991).} $\mu_{G}=\dfrac{\sum_{i}\sum_{j}e_{ij}D_{j}}{\sum_{i}D_{i}}=\dfrac{\sum_{i}D_{i}^{2}}{\sum_{i}D_{i}}=\dfrac{\mu_{D}^{2}+\sigma_{D}^{2}}{\mu_{D}}$$\ $
\end{proof}

\begin{theorem}\label{theorem:localmean}
For any general network $\mathcal{G}=(V,E)$ with mean degree $\mu_{D}$, the
local mean of friends is given by 
\begin{equation}
\mu_{L}=\mu_{D}+\frac{1}{2|V|}\sum_{(i,j)\in V\times V}e_{ij}\left[\frac{\left(D_{i}-D_{j}\right)^{2}}{D_{i}D_{j}}\right]
\end{equation}
where $D_{i}$ is the degree of node $i$, and $e_{ij}\in\left\{ 0,1\right\} $
indicates a connection between $i$ and $j$.
\end{theorem}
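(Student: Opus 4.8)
The plan is to manipulate the definition of the local mean directly, converting the double sum over nodes and their neighbors into a sum over edges where the symmetric structure causes the squared-difference term to appear. Starting from $\mu_L = \frac{1}{N}\sum_{i} \frac{1}{D_i}\sum_{j\in N(i)} D_j$, I would first rewrite the inner double sum as a sum over the directed edge set $\hat{E}$: each term $\frac{D_j}{D_i}$ corresponds to the directed edge $(i,j)$, so that $\sum_{i} \frac{1}{D_i}\sum_{j\in N(i)} D_j = \sum_{(i,j)\in \hat{E}} \frac{D_j}{D_i}$.

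The key step is to pair the two directed edges $(i,j)$ and $(j,i)$ arising from each undirected edge $\{i,j\}\in E$. Their combined contribution is $\frac{D_j}{D_i} + \frac{D_i}{D_j} = \frac{D_i^2 + D_j^2}{D_i D_j}$, and I would then invoke the elementary identity $\frac{D_i^2 + D_j^2}{D_i D_j} = \frac{(D_i - D_j)^2}{D_i D_j} + 2$, which is precisely what introduces the squared-difference numerator. Summing over undirected edges gives $\sum_{\{i,j\}\in E}\bigl[\frac{(D_i-D_j)^2}{D_i D_j} + 2\bigr] = \sum_{\{i,j\}\in E}\frac{(D_i-D_j)^2}{D_i D_j} + 2|E|$.

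To finish, I would apply the handshake lemma $2|E| = \sum_{i} D_i = N\mu_D$ to turn the constant term into $\mu_D$ after dividing by $N$, and convert the sum over unordered edges back into the symmetric ordered-pair form stated in the theorem: since each undirected edge $\{i,j\}$ corresponds to two ordered pairs in $V\times V$, we have $\sum_{\{i,j\}\in E}\frac{(D_i-D_j)^2}{D_i D_j} = \frac{1}{2}\sum_{(i,j)\in V\times V} e_{ij}\frac{(D_i-D_j)^2}{D_i D_j}$. Dividing through by $N = |V|$ then yields the claimed formula.

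The calculation itself is routine; the main thing to watch is the bookkeeping of edge multiplicities, i.e.\ keeping straight the factor of two between directed and undirected edges and ensuring that the handshake lemma accounts exactly for the $2|E|$ constant that becomes $\mu_D$. A minor point worth flagging at the outset is that the formula implicitly requires $D_i \geq 1$ for every node (no isolated vertices), since an isolated node contributes no neighbors to average over and would make the term $\frac{1}{D_i}$ undefined; I would note this convention explicitly before beginning the computation.
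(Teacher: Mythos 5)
Your proposal is correct and takes essentially the same route as the paper's proof: symmetrize the node-based double sum over the two directed copies of each edge, apply the identity $\frac{D_i^2+D_j^2}{D_iD_j}=\frac{(D_i-D_j)^2}{D_iD_j}+2$, and use the handshake lemma to convert the resulting $2|E|$ constant into $\mu_D$ after dividing by $|V|$. Your explicit remark that the formula presupposes no isolated vertices ($D_i \geq 1$) is a small but sensible point that the paper leaves implicit.
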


\begin{proof} Let $D_{i}$ denote the number of connections
of individual $i$, i.e. $D_{i}=\big|\left\{ k\in V:\quad(i,k)\in E\right\} \big|$.
Denote the set of neighbors of $i$ by $\mathscr{N}(i)=\left\{ k\in V:\quad(i,k)\in E\right\} $.
Define $F_{i}=\frac{1}{D_{i}}\sum_{j\in N(i)}D_{j}$ as the mean number
of friends for friends of $i$. The local mean is defined as: 
\[
\mu_{L}=\frac{1}{|V|}\sum_{i}F_{i}=\sum_{i\in V}\left[\frac{1}{D_{i}}\left(\sum_{j\in\mathscr{N}(i)}D_{j}\right)\right]
\]

Rewriting the expression for $\mu_{L}$ in terms of the connections
(edges) between individuals, we obtain:
\begin{eqnarray*}
\mu_{L} & = & \frac{1}{|V|}\sum_{i\in V}\left[\frac{1}{D_{i}}\left(\sum_{j\in V}e_{ij}D_{j}\right)\right]=\frac{1}{|V|}\sum_{i\in V}\sum_{j\in V}\left[e_{ij}\frac{1}{D_{i}}\left(D_{j}\right)\right]\\
 & = & \frac{1}{2|V|}\sum_{(i,j)\in V\times V}\left[e_{ij}\left(\frac{D_{j}}{D_{i}}\right)+e_{ji}\left(\frac{D_{i}}{D_{j}}\right)\right]=\frac{1}{2|V|}\sum_{(i,j)\in V\times V}e_{ij}\left[\frac{D_{j}}{D_{i}}+\frac{D_{i}}{D_{j}}\right]\\
 & = & \frac{1}{2|V|}\sum_{(i,j)\in V\times V}e_{ij}\left[\frac{D_{j}^{2}+D_{i}^{2}}{D_{i}D_{j}}\right]=\frac{1}{2|V|}\sum_{(i,j)\in V\times V}e_{ij}\left[\frac{\left(D_{i}-D_{j}\right)^{2}+2D_{i}D_{j}}{D_{i}D_{j}}\right]\\
 & = & \frac{1}{2|V|}\sum_{(i,j)\in V\times V}e_{ij}\left[\frac{\left(D_{i}-D_{j}\right)^{2}}{D_{i}D_{j}}\right]+\frac{1}{2|V|}\left(4|E|\right)\\
 & = & \mu_{D}+\frac{1}{2|V|}\sum_{(i,j)\in V\times V}e_{ij}\left[\frac{\left(D_{i}-D_{j}\right)^{2}}{D_{i}D_{j}}\right]\qed
\end{eqnarray*}
\end{proof}
Note that what we characterize as the local mean defined as above was examined by  others including \cite{Feld1991} etc. and was independently shown to be greater than the mean degree by us (\cite{krackhardt2002}) and others (including by Christian Borgs \& Jennifer Chayes 
in an online comment to an article by \cite{Strogatz2012}, and by \cite{jackson2016friendship}). However, the properties of the local mean htave not been formally examined and characterized.

\begin{theorem}\label{theorem:inversity}
Define the m-th moment of the degree distribution by $\kappa_{m}=\frac{1}{N}\sum_{i\in V}D_{i}^{m}$.
The local and global means are connected by the following relationship
involving the inversity $\rho$ and the -1,1,2, and 3rd moments of
the degree distribution as follows:
\begin{eqnarray}
\mu_{L}=\mu_{G}+\rho\sqrt{\left(\frac{\kappa_{1}\kappa_{3}-\kappa_{2}^{2}}{\kappa_{1}}\right)\left[\kappa_{-1}-\left(\kappa_{1}\right)^{-1}\right]}
\end{eqnarray}
\end{theorem}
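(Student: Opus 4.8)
The plan is to read the inversity $\rho$ straight off its definition as a Pearson correlation over the uniform distribution on the directed edge set $\hat E$, reduce each of its ingredients (the two first moments, the two variances, and the cross-moment) to moments $\kappa_m$ of the degree distribution, and then solve the resulting correlation identity for $\mu_L-\mu_G$. Throughout, expectations and variances are taken with respect to a uniformly random directed edge.

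First I would record the counting fact that powers every reduction: since $\hat E$ replaces each undirected edge by its two orientations, $|\hat E|=\sum_i D_i = N\kappa_1$, and each node $i$ occurs as the \emph{origin} of exactly $D_i$ directed edges and as the \emph{destination} of exactly $D_i$ directed edges. Summing over $\hat E$ with this bookkeeping gives the first moments at once: $\mathbb E[D^{O}]=\frac{1}{N\kappa_1}\sum_i D_i\cdot D_i=\kappa_2/\kappa_1$ and $\mathbb E[D^{ID}]=\frac{1}{N\kappa_1}\sum_i D_i\cdot D_i^{-1}=1/\kappa_1$. The same device yields the raw second moments $\mathbb E[(D^{O})^2]=\kappa_3/\kappa_1$ and $\mathbb E[(D^{ID})^2]=\kappa_{-1}/\kappa_1$, hence $\sigma_{D^O}^2=\frac{\kappa_1\kappa_3-\kappa_2^2}{\kappa_1^2}$ and $\sigma_{D^{ID}}^2=\frac{\kappa_1\kappa_{-1}-1}{\kappa_1^2}$.

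The key step, and the one place where genuine structure rather than bookkeeping enters, is the cross-moment $\mathbb E[D^{O}D^{ID}]=\frac{1}{N\kappa_1}\sum_{(j,k)\in\hat E} D_j/D_k$. I would identify this with the local mean by rewriting $\mu_L=\frac{1}{N}\sum_i \frac{1}{D_i}\sum_{j\in N(i)}D_j=\frac{1}{N}\sum_{(i,j)\in\hat E} D_j/D_i$, i.e. the edge-average of $D_{\text{dest}}/D_{\text{orig}}$. The inversity cross term is instead the edge-average of $D_{\text{orig}}/D_{\text{dest}}$, but because $\hat E$ is closed under edge reversal, relabelling each edge $(j,k)$ by its reverse $(k,j)$ shows $\sum_{(j,k)} D_j/D_k=\sum_{(j,k)} D_k/D_j=N\mu_L$, so $\mathbb E[D^{O}D^{ID}]=\mu_L/\kappa_1$. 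I expect this reversal symmetry to be the main obstacle to state cleanly, since one must keep track that the inversity correlation is over \emph{directed} edges and that the direction convention in $D^{ID}$ is exactly the one that reproduces $\mu_L$.

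Finally I would assemble the correlation identity $\rho\,\sigma_{D^O}\sigma_{D^{ID}}=\mathbb E[D^O D^{ID}]-\mathbb E[D^O]\mathbb E[D^{ID}]$. Substituting the quantities above and using Feld's identity $\mu_G=\kappa_2/\kappa_1$ (Theorem~\ref{theorem:globalmean}) collapses the right-hand side to $(\mu_L-\mu_G)/\kappa_1$, while the left-hand side equals $\rho\cdot\frac{1}{\kappa_1^2}\sqrt{(\kappa_1\kappa_3-\kappa_2^2)(\kappa_1\kappa_{-1}-1)}$. Clearing the common factor $\kappa_1$ and regrouping the radicand as $\big(\frac{\kappa_1\kappa_3-\kappa_2^2}{\kappa_1}\big)\big(\kappa_{-1}-\kappa_1^{-1}\big)$ yields exactly the claimed formula. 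As a sanity check I would note that Cauchy--Schwarz gives $\kappa_1\kappa_3\ge\kappa_2^2$ and $\kappa_1\kappa_{-1}\ge1$, so the radicand is nonnegative and $\Psi$ is a genuine positive function of the degree distribution, confirming the earlier assertion that $\operatorname{sign}(\mu_L-\mu_G)=\operatorname{sign}(\rho)$.
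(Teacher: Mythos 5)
Your proposal is correct and follows essentially the same route as the paper's own proof: both reduce the means and variances of the $D^{O}$ and $D^{ID}$ edge distributions to the moments $\kappa_{-1},\kappa_{1},\kappa_{2},\kappa_{3}$, identify the cross term $\sum_{(j,k)\in\hat{E}} D_j/D_k$ with $N\mu_L$, and solve the resulting Pearson correlation identity for $\mu_L-\mu_G$. The only differences are cosmetic---you use $\mathrm{Cov}(X,Y)=\mathbb{E}[XY]-\mathbb{E}[X]\mathbb{E}[Y]$ with raw second moments where the paper expands the centered sum directly, and you make explicit the edge-reversal symmetry of $\hat{E}$ (plus the Cauchy--Schwarz nonnegativity check) that the paper leaves implicit.
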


\begin{proof} Define the moments of the degree distribution
as: $\kappa_{m}=\frac{1}{N}\sum_{i}D_{i}^{m}$. Since we defined $\rho$
the measure of inversity as the correlation of two distributions that
we specify as the origin degree (\textbf{O}) and inverse desitnation
degree (\textbf{ID}) distributions. The \textbf{O} distribution consists
of the degree of nodes corresponding to edges, and \textbf{ID} distribution
consists of the inverse degree of nodes corresponding to edges. Thus,
each connection (edge) contributes \emph{two} entries to \emph{each}
distribution. For example, if there is a connection between $i$ and
$j$, i.e. $e_{ij}=1$, we would have $\left(D_{i},\frac{1}{D_{j}}\right)$
and $\left(D_{j},\frac{1}{D_{i}}\right)$. Observe that each individual
appears in both distributions multiple times based on degree.

Next, we detail the mean and variance of the distributions. First,
we consider the means. The mean of the origin distribution is $\mu_{O}=\frac{1}{2|E|}\sum_{i}D_{i}^{2}=\frac{\mu_{D}^{2}+\sigma_{D}^{2}}{\mu_{D}}=\mu_{G}=\frac{\kappa_{2}}{\kappa_{1}}$.
Similarly, the \textbf{ID} mean is $\mu_{ID}=\frac{1}{2|E|}\sum_{i}D_{i}\left(\frac{1}{D_{i}}\right)=\frac{1}{\mu_{D}}$.
Next, consider the variances. The variance of the origin distribution
(\textbf{O}) is computed as:
\begin{eqnarray*}
\sigma_{O}^{2} & = & \frac{1}{2|E|}\sum_{\left(i,j\right)\in E}\left(D_{i}-\mu_{O}\right)^{2}=\frac{1}{2|E|}\sum_{i\in V}D_{i}\left(D_{i}-\mu_{O}\right)^{2}\\
 & = & \frac{1}{N\mu_{D}}\sum_{i\in V}\left[D_{i}^{3}-2\mu_{O}D_{i}^{2}+\left(\mu_{O}\right)^{2}D_{i}\right]=\frac{\kappa_{3}}{\kappa_{1}}-\left(\frac{\kappa_{2}}{\kappa_{1}}\right)^{2}
\end{eqnarray*}
Next, we express the corresponding variance of the inverse destination
degree distribution (\textbf{ID}), $\sigma_{ID}^{2}$. Again, recall
that $\frac{1}{D_{i}}$ does not appear just once, but $D_{i}$ times.
Therefore, we have:
\begin{eqnarray*}
\sigma_{ID}^{2} & = & \frac{1}{2|E|}\sum_{(i,j)\in E}\left[\left(\frac{1}{D_{j}}-\frac{1}{\mu_{D}}\right)^{2}\right]=\frac{1}{2|E|}\sum_{(i,j)\in E}\left(\frac{1}{D_{j}^{2}}+\frac{1}{\mu_{D}^{2}}-\frac{2}{\mu_{D}D_{j}}\right)\\
 & = & \frac{1}{2|E|}\left[\sum_{(i,j)\in E}\frac{1}{D_{j}^{2}}+\frac{1}{\mu_{D}^{2}}\left(\sum_{(i,j)\in E}1\right)-\frac{2}{\mu_{D}}\sum_{(i,j)\in E}\frac{1}{D_{j}}\right]=\frac{1}{2|E|}\left[\sum_{j\in V}\frac{1}{D_{j}}+\frac{1}{\mu_{D}^{2}}2|E|-\frac{2}{\mu_{D}}N\right]\\
 & = & \frac{1}{\mu_{D}N}\left[\sum_{j\in V}\frac{1}{D_{j}}\right]-\frac{1}{\mu_{D}^{2}}\qquad=\left(\kappa_{1}\right)^{-1}\left[\kappa_{-1}-\left(\kappa_{1}\right)^{-1}\right]
\end{eqnarray*}
We next turn to the inversity and based on the definition we connect
it to the local and global means and the degree distribution.
\begin{eqnarray*}
\rho & = & \left(\frac{1}{2|E|\sigma_{O}\sigma_{ID}}\right)\sum_{\left(i,j\right)\in E}e_{ij}\left[\left(D_{i}-\mu_{O}\right)\left(\frac{1}{D_{j}}-\frac{1}{\mu_{D}}\right)\right]\\
\left(N\mu_{D}\sigma_{O}\sigma_{ID}\right)\rho & = & \left[\sum_{\left(i,j\right)\in E}e_{ij}\left(\frac{D_{i}}{D_{j}}\right)-\mu_{O}\left(\sum_{\left(i,j\right)\in E}\frac{1}{D_{j}}\right)-\frac{1}{\mu_{D}}\sum_{\left(i,j\right)\in E}D_{i}+\sum_{\left(i,j\right)\in E}e_{ij}\left(\frac{\mu_{O}}{\mu_{D}}\right)\right]\\
 & = & \left[N\left(\mu_{L}\right)-\mu_{O}\cdot N-\frac{1}{\mu_D}\sum_{\left(i,j\right)\in E}D_{i}+\sum_{\left(i,j\right)\in E}e_{ij}\left(\frac{\mu_{O}}{\mu_{D}}\right)\right]\\
 & = & \left[\left(N\mu_{L}\right)-N\mu_{O}-\frac{1}{\mu_D}\sum_{\left(i,j\right)\in E}D_{i}+2|E|\left(\frac{\mu_{O}}{\mu_{D}}\right)\right]\\
\implies\mu_{L} & = & \mu_{G}+\rho  \cdot  \mu_D \cdot \ \sigma_{O}\sigma_{ID}
\end{eqnarray*}
Finally, substituting $\mu_D = \kappa_1$ and the expressions for the variances, we obtain:
\begin{equation}
\mu_{L}=\mu_{G}+\rho\sqrt{\left(\frac{\kappa_{1}\kappa_{3}-\kappa_{2}^{2}}{\kappa_{1}}\right)\left[\kappa_{-1}-\left(\kappa_{1}\right)^{-1}\right]}
\end{equation}
\end{proof}

\begin{theorem}\label{theorem:globalintervention}
The expected degree of nodes chosen by global strategy is the global mean.
\end{theorem}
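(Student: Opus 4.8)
The plan is to show that the global strategy selects each node into the seed set with probability proportional to its degree, so that it performs degree-biased (size-biased) sampling, whose expected degree is exactly $\kappa_2/\kappa_1 = \mu_G$. The intuition mirrors the local-strategy statement, but requires more care because the number of seeds produced in one iteration of Step 2G is random rather than exactly one.

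First I would compute the marginal probability that a fixed node $s$ enters the seed set in a single iteration. The node $s$ is considered for addition precisely when the uniformly drawn node $r$ is one of its neighbors, and is then added when its independent coin succeeds. Since the graph is undirected, $s \in \mathscr{N}(r) \iff r \in \mathscr{N}(s)$, so $r$ lands on a neighbor of $s$ with probability $D_s/N$, and the coin succeeds with probability $p$; hence
\[
\Pr(s \text{ selected}) = \frac{p\,D_s}{N}.
\]
The crucial feature is that the common factor $p/N$ does not depend on $s$, so the selection probability is proportional to $D_s$.

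Next I would make precise the phrase ``expected degree of nodes chosen'' via a ratio-of-expectations (renewal--reward) formulation, which is the correct analogue of the local case given the random seed count per iteration. Writing the expected total degree of the selected nodes over the expected number of selected nodes, linearity of expectation gives
\[
\frac{\mathbb{E}\!\left[\sum_{s\text{ selected}} D_s\right]}{\mathbb{E}\!\left[\#\{s\text{ selected}\}\right]}
= \frac{\sum_{s\in V} D_s\,\Pr(s\text{ selected})}{\sum_{s\in V}\Pr(s\text{ selected})}
= \frac{\sum_{s} D_s^{2}}{\sum_{s} D_s}.
\]
Invoking linearity here is exactly what lets me avoid reasoning about the correlations among the selections within a single iteration (they all share the same draw $r$). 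Substituting $\kappa_m = \frac{1}{N}\sum_i D_i^{m}$, the right-hand side equals $\kappa_2/\kappa_1$, which is the global mean $\mu_G$ (equivalently $\mu_D + \sigma_D^2/\mu_D$, by Theorem \ref{theorem:globalmean}).

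The main obstacle is conceptual rather than computational: pinning down the correct meaning of ``expected degree of a chosen node'' when each iteration yields a variable (possibly zero) number of seeds, unlike the local strategy which always returns one. The ratio-of-expectations formulation resolves this, and once the interpretation is fixed the algebra collapses because the per-node factor $p/N$ cancels between numerator and denominator — showing in particular that the expected degree of a seed is independent of the choice of the parameter $p$.
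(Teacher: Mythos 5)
Your proposal is correct and follows essentially the same route as the paper's proof: you compute the per-node selection probability $\Pr(s \text{ selected}) = p\,D_s/N$ (the paper's $p_k = qD_k/N$) and then evaluate the degree-weighted ratio $\sum_k p_k D_k / \sum_k p_k = \kappa_2/\kappa_1 = \mu_G$, which is exactly the paper's calculation. Your only addition is to make explicit the ratio-of-expectations interpretation handling the random seed count per iteration, a point the paper leaves implicit in writing the expected degree as that degree-weighted ratio.
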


\begin{proof}
To determine the expected degree of a node chosen by the global strategy:
Choose $M=1$ node initially, (say X). With probability $q$, choose
each neighbor of X. For a node $k$ with degree $D_{k}$, the probability
of being chosen by this process is the first step when any of $k$'s friends is chosen as the initial node, and the second step is $k$ being chosen with probability $q$. This probability is $p_{k}=\frac{1}{N}\ D_{k} \times q = \frac{q D_k}{N}$.
The expected degree
of a chosen ``seed'' node is then the degree-weighted probability:
\[
\frac{\sum_{k\in V}\ p_{k}D_{k}}{\sum_{k\in V}p_{k}}=\frac{\sum_{k\in V}\frac{1}{N}q\ D_{k}^{2}}{\sum_{k\in V}\frac{1}{N}q\ D_{k}}=\frac{\frac{1}{N}\sum_{k\in V}D_{k}^{2}}{\frac{1}{N}\sum_{k\in V}D_{k}}=\frac{\mu_D^2+\sigma_D^2}{\mu_D}=\mu_{G}
\]
\end{proof}
Similar logic applies if we choose any arbitrary initial sample of
size $M$ as long as the network is large, i.e. $N\gg M$.

\begin{obs*}
Denote an undirected tie $(a,b)$ as a connection between nodes $a$ and $b$.
For any network with a given distribution of degrees, the distribution
of degrees is unchanged if any two ties $(a,b)$ and $(c,d)$ are rewired to either (i) $(a,c)$,$(b,d)$
or (ii) $(a,d)$,$(b,c)$.
\end{obs*}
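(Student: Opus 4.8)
The plan is to prove this degree-preservation claim by a direct, local bookkeeping argument on node degrees: I would show that each of the four endpoints involved in the swap loses exactly one incident edge and gains exactly one, while every other node is left completely untouched, so that the entire degree sequence is preserved.

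First I would observe that the rewiring is a purely local operation: it deletes the two ties $(a,b)$ and $(c,d)$ and inserts either $(a,c),(b,d)$ in case (i) or $(a,d),(b,c)$ in case (ii), while leaving every other edge of $\mathcal{G}$ intact. Consequently, for any node $v \notin \{a,b,c,d\}$, the set of edges incident to $v$ is literally unchanged, so $D_v$ is unchanged. It therefore suffices to verify that $D_a, D_b, D_c, D_d$ are each preserved.

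Next I would track the four endpoints one at a time. In case (i), node $a$ loses its incidence to $(a,b)$ but gains incidence to $(a,c)$; node $b$ loses $(a,b)$ and gains $(b,d)$; node $c$ loses $(c,d)$ and gains $(a,c)$; and node $d$ loses $(c,d)$ and gains $(b,d)$. Each of the four degrees thus decreases by one and increases by one, for a net change of zero. Case (ii) follows by the identical argument with the two target pairings interchanged. Since every node's degree is individually unchanged, the full degree sequence is identical before and after the swap, and hence so is every moment $\kappa_m$ of the degree distribution (which is precisely what the downstream rewiring theorem needs).

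The only point that genuinely requires care — and the closest thing to an obstacle — is degeneracy. The counts above implicitly assume $a,b,c,d$ are four distinct nodes and that the inserted ties are not already present, so that the result remains a simple graph with no self-loops, no multi-edges, and no double-counting at a shared endpoint. Under the standard non-degeneracy assumption (distinct endpoints, and at least one of the two pairings avoiding a pre-existing edge), the per-node counts hold exactly and the conclusion is immediate. I expect no deeper difficulty here, since the argument is entirely combinatorial and confined to the neighborhood of the four affected nodes.
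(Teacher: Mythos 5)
Your proposal is correct and follows essentially the same route as the paper, whose entire proof is the one-line observation that each affected node ``loses one tie and gains another and therefore the degrees are unchanged''; you simply spell out this bookkeeping explicitly for all four endpoints and for untouched nodes. Your added caveat about degeneracy (distinct endpoints, inserted ties not already present) is a sensible refinement that the paper handles not here but in the hypotheses of the downstream Rewiring Theorem, which explicitly assumes $(a,d),(b,c)\notin E$.
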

Each of these nodes loses one tie and gains another and therefore
the degrees are unchanged.

\begin{theorem}\label{theorem:rewire} [Rewiring Theorem]
Let network $\mathcal{G}=(V,E)$ with $N>3$ nodes include nodes $a,b,c,d$ with degrees ordered as: $D_{a}\le D_{b}<D_{c}\le D_{d}$. If $G$ containing edges $(a,b),(c,d)\in E$, but $(a,d),(b,c)\notin E$ is rewired to network $\mathcal{G}'=\left(V,E'\right)$, containing edges  $(a,d),(b,c)\in E'$, but  $(a,b),(c,d)\notin E'$, then $\mathcal{G}'$ has higher local mean than $\mathcal{G}$.
\end{theorem}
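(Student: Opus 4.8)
The plan is to invoke the closed form for the local mean from Theorem~\ref{theorem:localmean} and then track exactly which terms move under the rewiring. Recall $\mu_L = \mu_D + \frac{1}{2|V|}\sum_{(i,j)\in V\times V} e_{ij}\,\frac{(D_i-D_j)^2}{D_iD_j}$. By the Observation, replacing the undirected edges $(a,b),(c,d)$ with $(a,d),(b,c)$ leaves every node's degree unchanged, so $\mu_D$ and $|V|$ are unaffected and only the terms attached to the two removed and two added edges contribute to the difference. Since each undirected edge appears twice in the ordered sum over $V\times V$, writing $g(p,q)=\frac{(p-q)^2}{pq}$ I would obtain
\begin{equation*}
\mu_L(\mathcal{G}') - \mu_L(\mathcal{G}) = \frac{1}{|V|}\Big[\,g(D_a,D_d)+g(D_b,D_c)-g(D_a,D_b)-g(D_c,D_d)\,\Big].
\end{equation*}

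Next I would reduce the bracket to a comparison of ratios. Using $g(p,q)=\frac{p}{q}+\frac{q}{p}-2$, the four $-2$ constants cancel (two added, two removed), so it suffices to show $h(D_a,D_d)+h(D_b,D_c) > h(D_a,D_b)+h(D_c,D_d)$, where $h(p,q)=\frac{p}{q}+\frac{q}{p}$. The key step—and the one requiring the ordering hypothesis—is to collect the eight fractions over common denominators and factor. With $(w,x,y,z)=(D_a,D_b,D_c,D_d)$, grouping the $(z-x)$ numerators over denominators $w,y$ and the $(y-w)$ numerators over denominators $x,z$, I expect the bracket to collapse to
\begin{equation*}
h(w,z)+h(x,y)-h(w,x)-h(y,z) = (z-x)(y-w)\left(\frac{1}{wy}+\frac{1}{xz}\right).
\end{equation*}

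Finally I would read off the sign. Under $D_a\le D_b< D_c\le D_d$ we have $z-x=D_d-D_b>0$ (from $D_b<D_c\le D_d$) and $y-w=D_c-D_a>0$ (from $D_a\le D_b<D_c$), while $\frac{1}{wy}+\frac{1}{xz}>0$ since all degrees are positive. Hence the bracket is strictly positive and $\mu_L(\mathcal{G}')>\mu_L(\mathcal{G})$, with the strictness driven precisely by the hypothesis $D_b<D_c$.

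The main obstacle is purely the algebra of the middle step: spotting the factorization that renders the sign transparent, since a brute-force expansion of the four $g$-terms is opaque. If the factorization is not seen directly, a clean fallback is the substitution $p=e^{s}$, which turns each $h$ into $2\cosh$ of a log-difference and reduces the claim to $\cosh(A+B+C)+\cosh(B)>\cosh(A)+\cosh(C)$ for nonnegative gaps $A,C$ and a positive gap $B$; after a sum-to-product identity this follows from the monotonicity of $\sinh$, again isolating $B>0$ (i.e.\ $D_b<D_c$) as the source of strictness. A secondary point to handle carefully is the factor of two from the directed-edge double counting in the $V\times V$ sum, which must be applied consistently to both removed and added edges so that it cancels against the $\frac{1}{2|V|}$ prefactor to leave the $\frac{1}{|V|}$ above.
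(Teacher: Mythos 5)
Your proposal is correct and takes essentially the same approach as the paper: both isolate the four edge terms affected by the swap in the local-mean formula and show the difference is positive by factoring, and your factored form $(D_d-D_b)(D_c-D_a)\left(\frac{1}{D_aD_c}+\frac{1}{D_bD_d}\right)$ is exactly the paper's expression $(D_d-D_b)\left(\frac{1}{D_a}-\frac{1}{D_c}\right)+(D_c-D_a)\left(\frac{1}{D_b}-\frac{1}{D_d}\right)$ carried one step further. Your careful handling of the directed double-counting and the cancellation of the $-2$ constants matches the paper's accounting, and like the paper you correctly note that only $D_b<D_d$ and $D_a<D_c$ are truly needed for strict positivity.
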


\begin{proof}
First, observe that the degree distribution is unaffected by the change, and therefore the global mean (which only depends on mean and variance of the
degree distribution) is also unaffected, i.e. $\mu_G(\mathcal{G})=\mu_G(\mathcal{G}')$. Recall that the local mean is $\mu^{L}=\frac{1}{N} \sum_i \sum_j e_{ij} \left[\dfrac{D_i}{D_j}+\dfrac{D_j}{D_i}\right]$. Since between $\mathcal{G}$ and $\mathcal{G}'$ the degrees of all nodes are the same, and all edges are the same except the two rewired edges, we can write the difference between the local means
the local means as:
\begin{eqnarray*}
\mu^{L}(\mathcal{G}')-\mu^{L}(\mathcal{G}) & = &\frac{1}{N}  \left[\left(\frac{D_{a}}{D_{d}}+\frac{D_{d}}{D_{a}}+\frac{D_{b}}{D_{c}}+\frac{D_{c}}{D_{b}}\right)-\left(\frac{D_{a}}{D_{b}}+\frac{D_{b}}{D_{a}}+\frac{D_{c}}{D_{d}}+\frac{D_{d}}{D_{c}}\right)\right]\\
 & = &\frac{1}{N}  \left[ \left(D_{d}-D_{b}\right)\left(\frac{1}{D_{a}}-\frac{1}{D_{c}}\right)+\left(D_{c}-D_{a}\right)\left(\frac{1}{D_{b}}-\frac{1}{D_{d}}\right)\right]
>0
\end{eqnarray*}
The last inequality follows from the ordering of the node degrees. Note that we actually only require the conditions $D_b<D_d$ and $D_a<D_c$ to hold.
\end{proof}

\begin{lemma}\label{lemma:localmaxmin}
Given a connected network with $|V|=N>3$ nodes and any non-degenerate
degree distribution. To achieve maximum local mean among all networks satisfying the given degree distribution, the nodes with maximum
and minimum degree must be connected to each other.
\end{lemma}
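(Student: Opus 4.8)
The plan is to argue by contradiction, using the Rewiring Theorem (Theorem \ref{theorem:rewire}) as the engine. By Theorem \ref{theorem:localmean}, maximizing $\mu_L$ over all graphs realizing the fixed degree distribution is equivalent to maximizing $\Phi(\mathcal{G})=\sum_{(i,j)\in E}\tfrac{(D_i-D_j)^2}{D_iD_j}$, and since only finitely many graphs realize a given degree sequence a maximizer $\mathcal{G}^\ast$ exists. Suppose, toward a contradiction, that in $\mathcal{G}^\ast$ \emph{no} maximum-degree node is adjacent to \emph{any} minimum-degree node. Fix a maximum-degree node $u$ (so $D_u=D_{\max}$) and a minimum-degree node $v$ (so $D_v=D_{\min}$). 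Non-degeneracy gives $D_{\min}<D_{\max}$, connectedness with $N>3$ gives $D_{\max}\ge 2$, and by hypothesis $(u,v)\notin E^\ast$. The goal is to exhibit a single degree-preserving rewiring that creates the edge $(u,v)$ and strictly increases $\mu_L$, contradicting maximality.

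The rewiring I would use takes an edge $(v,b)$ incident to $v$ and an edge $(u,c)$ incident to $u$ and replaces them by $(v,u)$ and $(b,c)$; this is exactly the instance of Theorem \ref{theorem:rewire} with $a=v$, $d=u$, and the theorem's two middle nodes equal to $b$ and $c$. The point of choosing the roles this way is that the relaxed strict hypotheses of Theorem \ref{theorem:rewire}, namely $D_a<D_c$ and $D_b<D_d$, hold automatically under the standing assumption: every neighbor $c$ of $u$ satisfies $D_c>D_{\min}=D_v$ (no minimum-degree node touches $u$), and every neighbor $b$ of $v$ satisfies $D_b<D_{\max}=D_u$ (no maximum-degree node touches $v$). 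Hence, as soon as I produce an admissible pair $b,c$, the theorem yields $\mu_L(\mathcal{G}')>\mu_L(\mathcal{G}^\ast)$ with $\mathcal{G}'$ having the same degree distribution, which is the contradiction I want.

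The main obstacle is the combinatorial feasibility of that single rewiring: I must choose $b\in\mathscr{N}(v)$ and $c\in\mathscr{N}(u)$ so that the result is still a simple graph, i.e. $b\ne c$ (no self-loop on the new edge $(b,c)$) and $(b,c)\notin E^\ast$ (no duplicated edge). The remaining incidence conditions ($b\ne u$, $c\ne v$, all four nodes distinct, and $(u,v)\notin E^\ast$) are immediate from $(u,v)\notin E^\ast$. I would settle feasibility by a degree-counting argument. Fix any $b\in\mathscr{N}(v)$ (which is nonempty since $v$ has degree $D_{\min}\ge 1$) and suppose, for contradiction, that every $c\in\mathscr{N}(u)\setminus\{b\}$ already had $(b,c)\in E^\ast$; note $\mathscr{N}(u)\setminus\{b\}\ne\emptyset$ because $D_{\max}\ge 2$. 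Then $b$ is adjacent to all of $\mathscr{N}(u)\setminus\{b\}$ and also to $v$, where $v\notin\mathscr{N}(u)$ since $(u,v)\notin E^\ast$. If $b\notin\mathscr{N}(u)$ this gives $D_b\ge D_{\max}+1$, which is impossible; if $b\in\mathscr{N}(u)$ it gives $D_b\ge D_{\max}$, hence $D_b=D_{\max}$, making $b$ a maximum-degree neighbor of the minimum-degree node $v$ and contradicting the standing hypothesis. Either way we reach a contradiction, so an admissible $c$ exists, and the rewiring above completes the argument.

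One caveat I would flag explicitly: the argument above optimizes over all graphs realizing the degree distribution, matching the unconstrained setting of Theorem \ref{theorem:rewire}. If the maximization is intended to range only over \emph{connected} realizations, I would additionally verify that the chosen rewiring (or a slight variant of it) can be taken to preserve connectivity; this is the only place where extra casework might be needed, and it does not affect the core degree-bound argument that identifies the admissible pair $b,c$.
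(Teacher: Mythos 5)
Your proposal is correct and follows essentially the same route as the paper's own proof: assume a maximizer in which the extreme-degree nodes are not adjacent, then invoke the Rewiring Theorem (Theorem \ref{theorem:rewire}, under its relaxed strict conditions $D_a<D_c$ and $D_b<D_d$) on one edge incident to the minimum-degree node and one incident to the maximum-degree node, producing a degree-preserving rewiring that strictly increases $\mu_L$ and contradicts maximality. If anything, your execution is tighter than the paper's: the paper introduces an intermediate-degree neighbor $y$ that is never used, does not justify why a neighbor $b$ of $a$ with $D_b<D_z$ exists, and never verifies that the chosen neighbor $x$ of $z$ satisfies $D_x>D_a$ (if $D_x=D_a$ the rewiring gain is zero and the contradiction fails), whereas your standing hypothesis that no maximum-degree node touches any minimum-degree node supplies both strict inequalities automatically, your degree-counting argument cleanly settles the existence of an admissible non-adjacent pair $(b,c)$, and your connectivity caveat flags a point the paper silently ignores.
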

\begin{proof}
We prove this by contradiction. Let the network $\mathcal{G}=(V,E)$ have the maximum local mean for the specified degree distribution.
Label $a$ and $z$ as the nodes with minimum and maximum degrees in our
network. These degrees must be different ($D_{a}\ne D_{z}$) in a non-degenerate distribution.
Assume $a$ and $z$ are not connected to each other.

There must be a highest degree node $z$ connected to a node $y$ that satisfies the following conditions:
(1) $y$ is not directly connected to $a$, i.e. $(a,y)\notin E$ and (2) $D_a<D_y<D_z$,
Note that (1) must be satisfied since $D_z \ge D_a$, and (2) must be satisfied since $a$ and $z$ are lowest and highest degree nodes. Now, we can find a neighbor of $a$, say $b$ with $D_b<D_z$. Choose a neighbor $x$ of $z$ that is not connected to $b$. $x$ must exist, otherwise $b$ and $z$ would have the same degree, contradicting the assumption that lowest and highest degree nodes are not connected. Observe that we can increase the local mean by rewiring the network to $\mathcal{G}'$ by connecting $(a,z)$ and $(b,x)$ in place of $(a,b)$ and $(x,z)$ as in Theorem \ref{theorem:rewire} above.
Thus, network $\mathcal{G}$ that we started with could not have had the maximum local mean, and we have a contradiction. Thus, the statement of the theorem must hold.
\end{proof}

\begin{theorem}\label{theorem:leveragemax}
If the degree distribution is unconstrained, the star network maximizes the local leverage $\lambda_L=\frac{\mu_L}{\mu_D}$. 
\end{theorem}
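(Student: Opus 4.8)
The plan is to collapse the problem to a per-edge bound by rewriting the leverage as an \emph{average} over edges rather than a sum. Starting from the closed form in Theorem~\ref{theorem:localmean}, I would divide the expression for $\mu_L$ by $\mu_D$ and use the handshake identity $N\mu_D=\sum_i D_i = 2|E|$ to cancel the network-size and edge-count prefactors. Converting the ordered double sum (which counts each undirected edge twice) into a sum over undirected edges and writing the per-edge disparity as $g_{ij}=\frac{(D_i-D_j)^2}{D_iD_j}=\frac{D_i}{D_j}+\frac{D_j}{D_i}-2$, this yields
\[
\lambda_L=\frac{\mu_L}{\mu_D}=1+\frac{1}{2|E|}\sum_{\{i,j\}\in E} g_{ij},
\]
so $\lambda_L$ equals $1$ plus \emph{half the mean} of the $g_{ij}$ over edges. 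The key structural observation is that $|E|$ enters only through this averaging; consequently the freedom in the number of edges drops out, and maximizing $\lambda_L$ decouples into bounding a single generic summand.

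Next I would bound one term. In any simple graph on $N$ nodes, every non-isolated vertex has degree in $\{1,\dots,N-1\}$, so for any edge $\{i,j\}$ the ratio $t=D_i/D_j$ lies in $[\tfrac{1}{N-1},\,N-1]$. Since $g_{ij}=t+t^{-1}-2$ and $t\mapsto t+t^{-1}$ is convex with minimum at $t=1$, the term $g_{ij}$ increases as the ratio moves away from $1$ and is therefore maximized at the extreme ratios $t=N-1$ or $t=\tfrac{1}{N-1}$. Both extremes give the single uniform bound $g_{ij}\le \frac{(N-2)^2}{N-1}$, attained exactly when $\{D_i,D_j\}=\{1,N-1\}$. (Equivalently, one checks directly that $\frac{(x-y)^2}{xy}$ on the box $[1,N-1]^2$ is maximized at the corner $(1,N-1)$.)

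Combining the two steps, every summand obeys the same bound, so the edge-average does too, giving $\lambda_L\le 1+\frac{(N-2)^2}{2(N-1)}$ for \emph{every} $N$-node network. It then remains to verify that the star attains this value: its hub has degree $N-1$ and each of its $N-1$ leaves has degree $1$, so every edge realizes the extremal pair $\{1,N-1\}$ and hence $g_{ij}=\frac{(N-2)^2}{N-1}$ on all edges simultaneously, forcing equality. If uniqueness is wanted, I would add that equality requires every edge to join a degree-$1$ and a degree-$(N-1)$ vertex; for $N>3$ a graph can contain at most one vertex of degree $N-1$ and no edge between two leaves, which pins down the star exactly.

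I expect the main obstacle to be conceptual rather than computational: recognizing that dividing by $\mu_D$ turns the \emph{sum} of Theorem~\ref{theorem:localmean} into an \emph{average}, which both removes the awkward dependence on $|E|$ and decouples the optimization across edges. Once this reformulation is in place, the remaining work is a routine one-variable convexity estimate together with a direct evaluation for the star. One technical point I would flag at the outset is that $\mu_L$ is only well defined when every vertex has positive degree, so the maximization is implicitly taken over $N$-node graphs with no isolated vertices.
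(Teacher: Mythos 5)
Your proposal is correct, and it follows the same basic route as the paper's proof: both start from the closed form in Theorem~\ref{theorem:localmean}, reduce $\lambda_L$ to a per-edge bound on the degree-disparity term, maximize that term at the extreme degree pair $(N-1,1)$, and conclude that the star is optimal. The difference is in the execution, and yours is the more careful of the two. The paper writes $\lambda_L = 1+\frac{\sum_{i,j} e_{ij}\left[\frac{D_i}{D_j}+\frac{D_j}{D_i}\right]}{\sum_{i,j}e_{ij}}$, which drops both the factor $\frac{1}{2}$ and the $-2$ per edge that come out of Theorem~\ref{theorem:localmean}; its resulting cap $\lambda_L^{\max}=1+\left(\frac{\Delta}{\delta}+\frac{\delta}{\Delta}\right)$ is consequently not tight (with $\Delta=N-1$, $\delta=1$ it is roughly $N$, about twice the star's true leverage), and the paper's closing step --- observing that the bound is largest at $\Delta=N-1$, $\delta=1$ and asserting this ``implies a star network'' --- does not by itself establish attainment, since a uniform upper bound being maximized at certain degree values is not the same as some network achieving it. Your normalization $\lambda_L = 1+\frac{1}{2|E|}\sum_{\{i,j\}\in E} g_{ij}$ is exact, your per-edge bound $g_{ij}\le\frac{(N-2)^2}{N-1}$ yields $\lambda_L \le 1+\frac{(N-2)^2}{2(N-1)} = \frac{(N-1)^2+1}{2(N-1)}$, which is precisely the star's leverage, and your equality analysis (every edge must join a degree-$1$ and a degree-$(N-1)$ node, and under that constraint two degree-$(N-1)$ nodes would have to be adjacent to each other, a contradiction, so the hub is unique) closes the attainment and uniqueness gap the paper leaves open --- just phrase the uniqueness claim as holding within the equality-constrained class, since in general a graph can of course have many vertices of degree $N-1$. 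Your convexity argument in the ratio $t=D_i/D_j$ plays exactly the role of the paper's discrete shift inequality $f(x+1,y-1)>f(x,y)$; either works. Your caveat that $\mu_L$ requires all degrees positive is also apt and is left implicit in the paper.
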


\begin{proof}
Let $\delta$ and $\Delta$ be the minimum and maximum degree. 
Define $f(x,y)=\frac{x}{y}+\frac{y}{x}$.
Without loss of generality, assume that $x\ge y$. 
First, observe that $f(x+1,y-1)>f(x,y)$. To prove this, we can express 
\begin{eqnarray*}
f(x+1,y-1)-f(x,y)
&=&
\frac{x+1}{y-1}+\frac{y-1}{x+1} - \left(\frac{x}{y} + \frac{y}{x}\right)\\
&=&
\left(x+y\right)\left[\frac{1}{y(y-1)}-\frac{1}{x(x+1)}\right]
> 0
\end{eqnarray*}
where the last inequality follows from the assumption $x>y$.
Thus, the maximum value of $f(x,y)$ when $\delta \le x,y\le \Delta$ is at $x=\Delta$, $y=\delta$ or $f(\Delta,\delta)=\frac{\Delta}{\delta}+\frac{\delta}{\Delta}$.
Observe that the ratio of local mean to mean degree can be expressed as 
\begin{eqnarray*}
\lambda_L=\frac{\mu_L}{\mu_D} &=& 1+\frac{ \sum_{i,j \in V} e_{ij} \left[\dfrac{D_i}{D_j}+\dfrac{D_j}{D_i}\right]}{\sum_{i,j}e_{ij}}
\end{eqnarray*}
Thus, for each edge, the maximum value of $\left[\dfrac{D_i}{D_j}+\dfrac{D_j}{D_i}\right]$ from above is bounded by $\left(\frac{\Delta}{\delta}+\frac{\delta}{\Delta}\right)$, and the maximum local leverage is $\lambda_L^{\max}=1+\left(\frac{\Delta}{\delta}+\frac{\delta}{\Delta}\right)$. Observe that expression is maximum when the highest degree node is $\Delta=N-1$ and is connected to a lowest degree node of degree $\delta=1$, which implies a star network. Therefore, no network can have higher local leverage than the star network.
\end{proof}

\newpage
\section{Data on Real Networks}\label{ssection:realnetworks}
We use a wide variety of real networks to determine properties and illustrate of the networks as it relates to the interventions detailed in the paper. We use data from two repositories.

\subsection{Koblenz Network Collection}
The networks are selected across several categories (Affiliation, Face-to-face Social, Online Social, Computer, Infrastructure and Biological networks), and span a wide range in network characteristics like size and density (Table \ref{table:realnetworks2}). These networks also vary widely in terms of their size, from a low of 25 to networks with millions of nodes (e.g. Youtube).
All network data was obtained from the Koblenz Network Collection \cite{kunegis2013konect}. We examine these real networks on a number of dimensions, the number of nodes, edges and the variation in the degree distribution.
\renewcommand{\baselinestretch}{1} 
\begin{table}[H]
\centering{}\caption{Real Network Characteristics}
\label{table:realnetworks2}%
\begin{tabular}{clccccc}
\hline
\hline
\hline
Label & Network Name & Nodes & Edges & Min Degree & Max Degree \\ 
\hline
\multicolumn{6}{l}{\textit{Collaboration}}\tabularnewline
A1 & Actor-Movie & 383640 & 1470338 & 1 & 655 \\ 
  A2 & Club Mmebers &  25 & 91 & 3 & 20 \\ 
  A3 & Citation (Physics) & 28045 & 3148413 & 1 & 4909 \\ 
  A4 & Citation (CS) & 317080 & 1049865 & 1 & 343 \\ 
\multicolumn{6}{l}{\textit{Face-toFace Interaction}}\tabularnewline
  FS1 & Physician & 117 & 464 & 2 & 26 \\ 
  FS2 & Adolescent & 2539 & 10454 & 1 & 27 \\ 
  FS3 & Contact & 274 & 2124 & 1 & 101 \\ 
  FS4 & Conference & 410 & 2765 & 1 & 50 \\ 
\multicolumn{6}{l}{\textit{Online Social}}\tabularnewline
  OS1 & PGP Users & 10679 & 24315 & 1 & 205 \\ 
  OS2 & Flickr & 105722 & 2316667 & 1 & 5425 \\ 
  OS3 & Advogato & 5042 & 40509 & 1 & 803 \\ 
  OS4 & Twitter & 465016 & 833539 & 1 & 677 \\ 
\multicolumn{6}{l}{\textit{Topology of Computer Networks}}\tabularnewline
  C1 & Internet Topology & 34761 & 107719 & 1 & 2760 \\ 
  C2 & WWW (Google) & 855802 & 4291352 & 1 & 6332 \\ 
  C3 & Gnutella P2P & 62561 & 147877 & 1 & 95 \\ 
\multicolumn{6}{l}{\textit{Infrastructure}}\tabularnewline
  I1 & Power Grid & 4941 & 6593 & 1 & 19 \\ 
  I2 & US Airports & 1572 & 17214 & 1 & 314 \\ 
  I3 & CA Roads & 1957027 & 2760387 & 1 & 12 \\ 
\multicolumn{6}{l}{\textit{Biological}}\tabularnewline
  B1 & Human Protein 1 & 2783 & 6222 & 1 & 129 \\ 
  B2 & Human Protein 2 & 5973 & 146385 & 1 & 855 \\ 
  B3 & Yeast Protein & 1458 & 1970 & 1 & 56 \\ 
  B4 & C. Elegans & 453 & 2033 & 1 & 237 \\ 
 \hline
\hline
\hline
\end{tabular}
\end{table}

\subsection{India Village Networks}
In addition, we also use data from $N=75$ villages in India made publicly available (see \cite{Banerjee1236498} for details). The summary statistics for those village household networks are detailed in Table \ref{table:summarystatsnet}.
\begin{table}[H]
\centering
\caption{Summary Statistics of Village Networks}\label{table:summarystatsnet}
\begin{tabular}{l
*{4}{S[table-format=3.2]}
}
\hline
{\textbf{Network Statistic}} & {\textbf{Mean}} & {\textbf{SD}} & {\textbf{Min}} & {\textbf{Max}}  \\
\hline
Number of households & 216.69 & 61.22 & 77 & 356 \\
Number of (undirected) edges & 993.31 & 348.77 & 334 & 2015 \\
Density &0.05   &0.02 &0.02 &0.11 \\
Degree Mean &9.10 &1.573 &6.13 & 12.78 \\
Degree Variance & 52.03 &19.88 &27.80 & 124.56 \\
\hline
\end{tabular}
\end{table}
\renewcommand{\baselinestretch}{2} 

\section{Individual Friendship Paradox}\label{section:sindividualfp}
A basic view of the friendship paradox is developed by plotting the average number of friends (degree) of individual nodes' ``friends'' on the vertical axis against the average degree (Fig. \ref{figure:fpindividual}, Fig. \ref{figure:individualfpcdf}). For example, in the Contact (In person Social) network, we see a deep blue region above and to the left of the $45^{\circ}$ line. Although present across all networks, the pattern is most prominent in the WWW (Google) or Twitter (Online Social) network. Observe also that in the Road Network, only $\Delta=37\%$ of nodes have a higher average number of friends of friends than their own degree.%
\renewcommand{\baselinestretch}{1}
\begin{figure}
\centering{}
\includegraphics[scale=1]{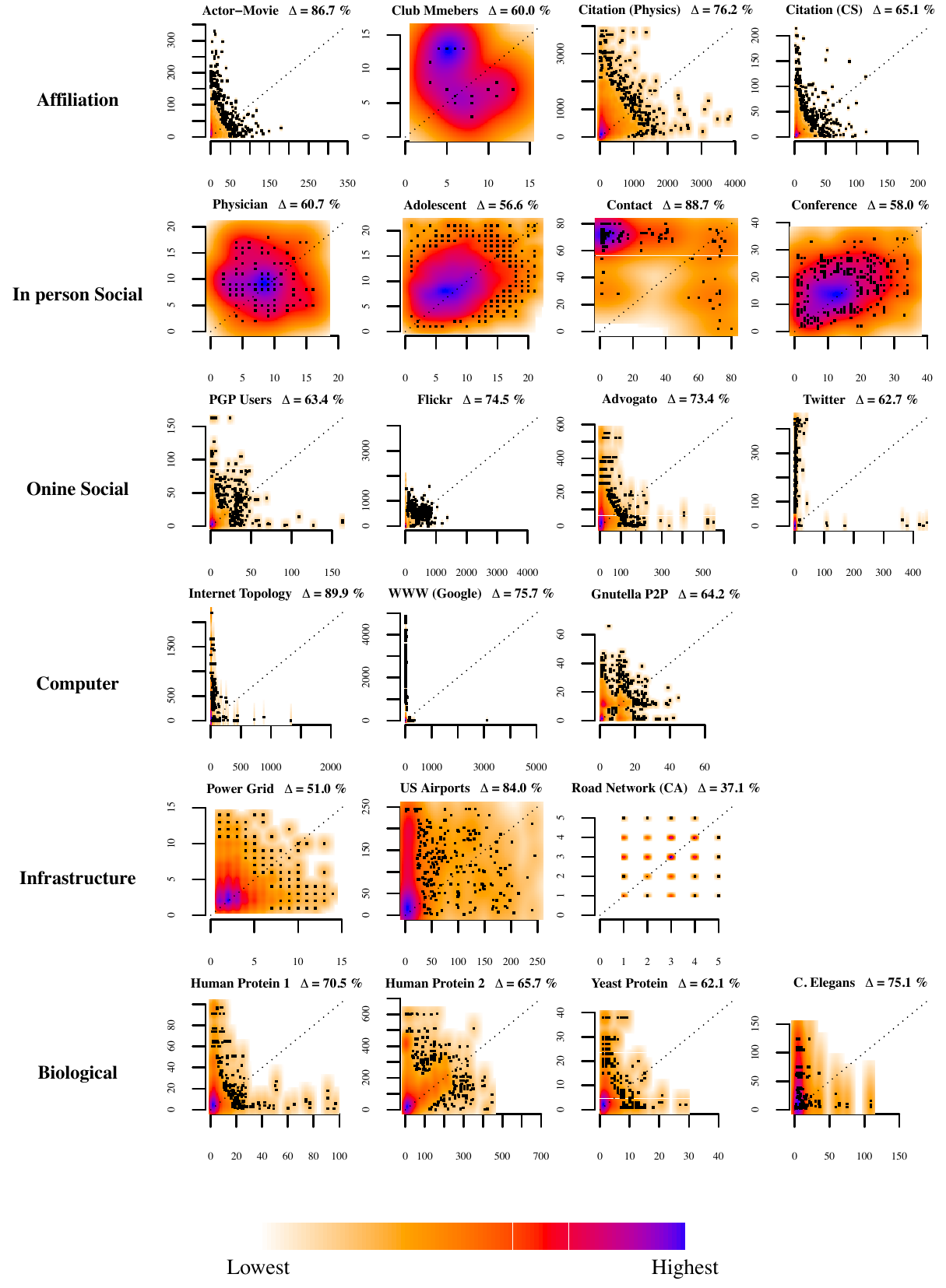}
\caption{\footnotesize{
Friendship Paradox at Individual Level}. Density plot of average number of friends of nodes compared to node degree in networks. $\Delta$ indictes the proportion of nodes that have a higher average number of friends of friends than their degree.  Lowest density regions within each network are marked by white / orange, and highest density regions are marked in blue. For all networks, the highest density region lies above and to the left of the 45 degree line. For some networks like Adolescent Health or Road Network (CA), it is relatively more evenly distributed both above and below the 45 degree line, whereas for networks like Internet Topology or Twitter, the distribution is skewed above and to the left.}
\label{figure:fpindividual}
\end{figure}
\renewcommand{\baselinestretch}{2} 

We illustrate this ``individual friendship paradox'' using a scatterplot of the node degree versus the average friend degree in Figure \ref{figure:individualfpcdf}.
Nodes that have a higher degree than their average friends do are colored red, whereas nodes that have lower degree are colored blue. Across most real networks, we observe that the blues vastly outnumber the reds. Relatedly, there are several nodes with low degrees whose friends on average have a high degree.

\renewcommand{\baselinestretch}{1} 
\begin{figure}[H]
\begin{center}
\includegraphics[scale=1.2]{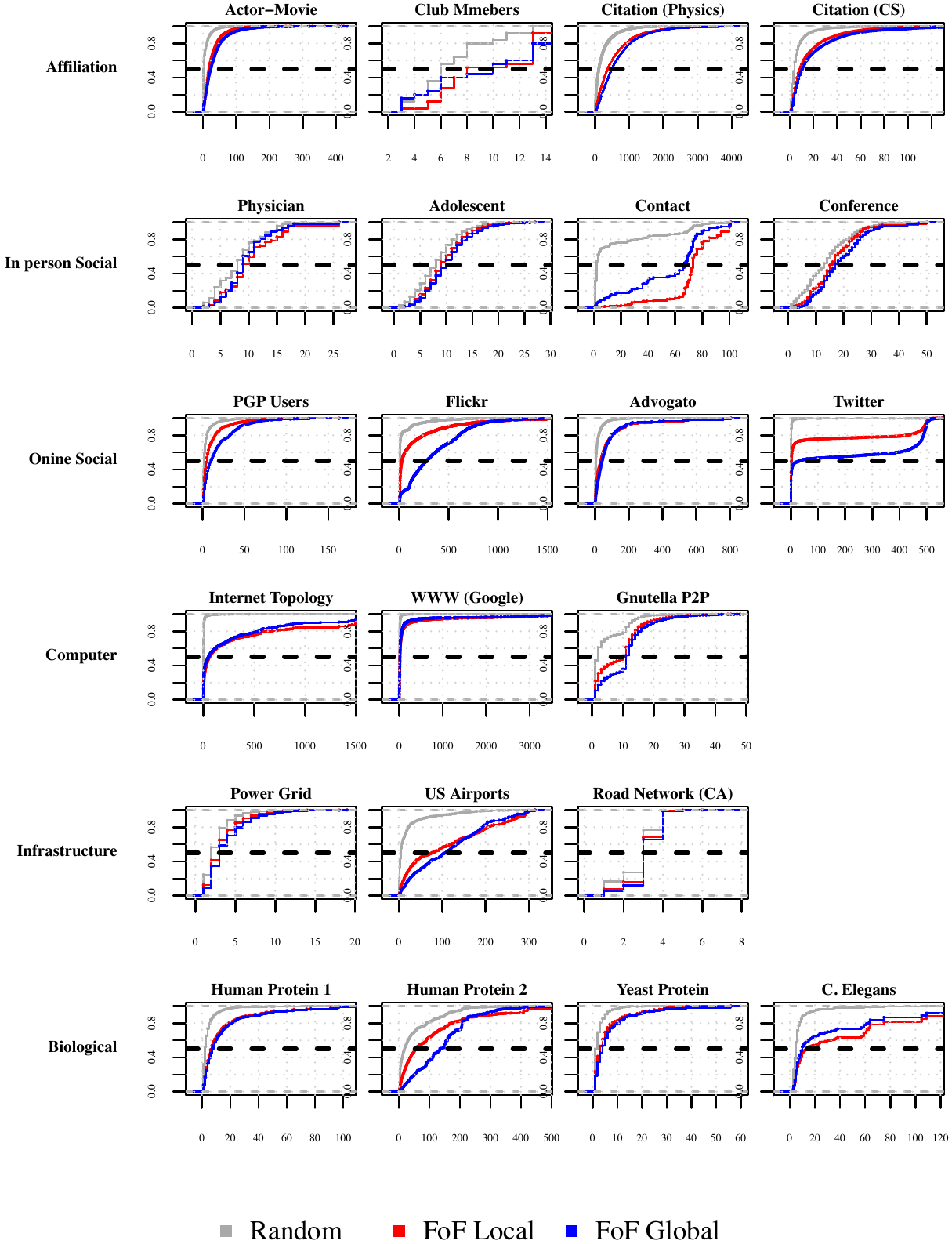}
\caption{Individual Friendship Paradox. Empirical Cumulative Distribution Functions (CDF) of Real Networks. Panels show the CDF of 3 different network properties at the individual node level. For a specific node degree, the probability that a node with a lower (or identical) degree is chosen by the sampling strategy for random sampling (gray), local FoF sampling (red) and global FoF sampling (green). Across all networks, for lower degrees, the random sampling curve is to the left of the local and global FoF curves. In several networks, global FoF is to the left and higher than local FoF (e.g. Contact), whereas in others, it is to the right (e.g. Flickr).}
\label{figure:individualfpcdf}
\end{center}
\end{figure}
\renewcommand{\baselinestretch}{2}

\section{Local and Global Means}\label{section:slocalglobalmeans}

We illustrate the practical impact of the distinction between the two means, with the following questions : (a) Is the Local Mean always greater (or smaller)  than the Global Mean? (b)  Can both means be relatively high (or low)? (c) What network (sub)structures result in a high Local Mean or Global Mean?

We examine four illustrative network structures (Fig. \ref{figure:two-by-two}) to answer these questions and to understand the differences between the two means. We find that both local and global mean can be much greater than the mean degree, and between these two means, either of them can be greater than the other.
Especially noteworthy is the difference between the Local and Global panels for network (Fig. \ref{figure:two-by-two}C): the Local mean is equivalent to the mean degree, because each node is equally weighted in terms of $w^L_i$. However, the Global mean is higher for this network since it assigns a higher weight $w^G_i$ to higher degree nodes.
\newpage

\renewcommand{\baselinestretch}{1}
\begin{figure}[H]
\centering{
\includegraphics[scale=0.84]{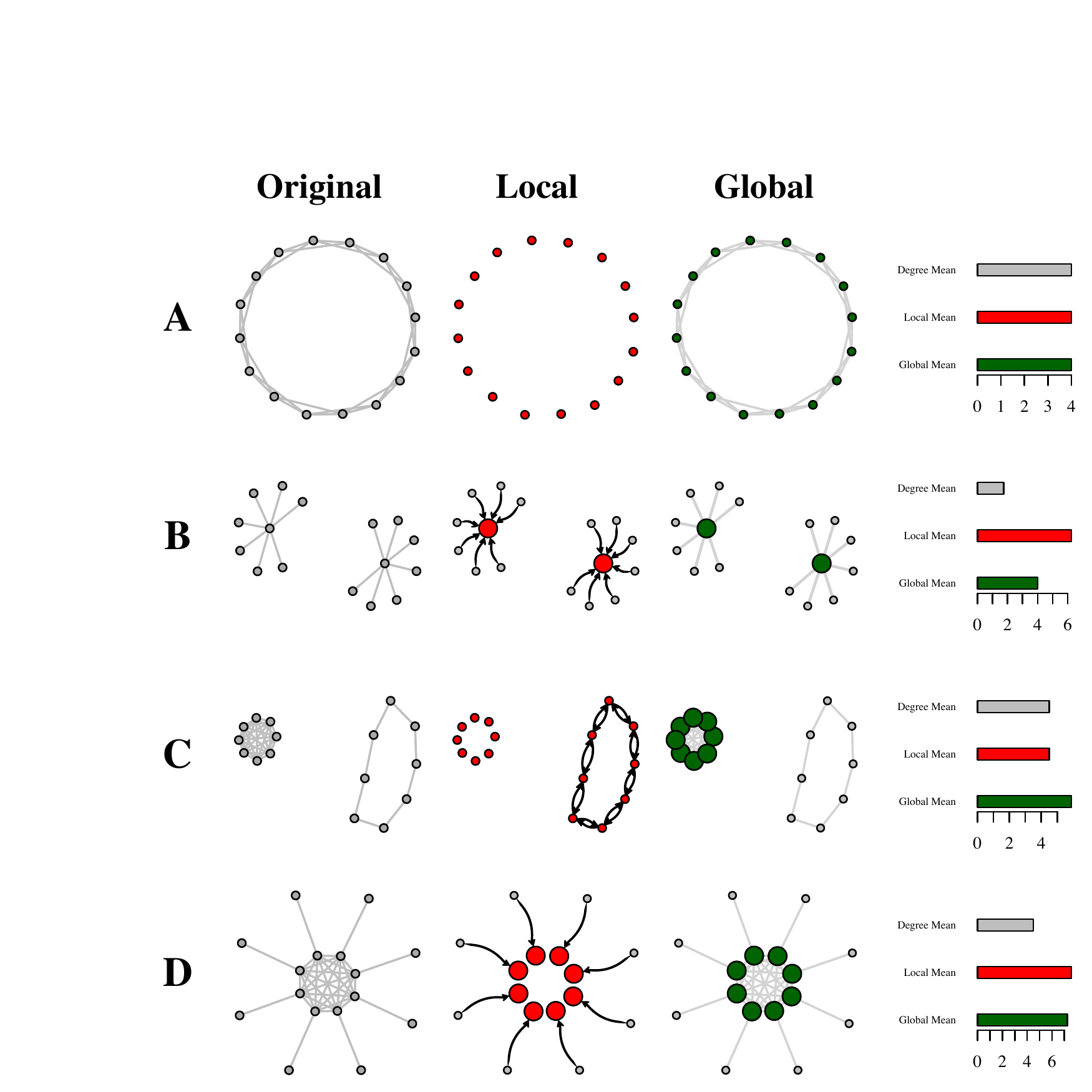}
}
\caption{\fcaptionsize{
\textbf{Four Illustrative Networks with Varying Local and Global Means.}\\
Each network in (A)-(D) has the original network plot (left), local weighted network (middle) and global weighted network (right). On the right is a barplot indicating the mean degree, local mean and global mean for each of the networks. 
\textit{Local Panel (Red)}: In the local weighted network plot (middle), nodes are sized proportional to their weight $\left(w^L_i\right)$ in contributing to the local mean. Edges that receive a \textit{higher than median} weight in computing the local mean are in black color. Otherwise, the edges are not plotted in the middle panel. Note that although the original networks are undirected, the selected edges are illustrated as \textit{directed} since the weights are directed.
\textit{Global Panel (Green)}: Nodes are sized proportional to their weight $\left(w^G_i\right)$ in contributing to the global mean. Edges are all weighted equally in the global weighted network. 
(A) \textit{Small World Ring:}  Each node has four friends, and local and global mean are both equal to avert age degree (4). None of the edges are shown in the middle panel since all edges have identical weight in computing the local mean.  All nodes in both local and global means have the same weight, and size in the middle and right panel.
(B) \textit{Two Central Hubs with Spokes:}  Each central hub is connected to 7 nodes. The mean degree is lowest in this network. However, local mean is substantially higher than the global mean, and is higher than the mean degree across all networks (a)-(d).  In local panel, we see that the weight of  central hubs has increased, whereas the corresponding weight for the low degree ``spoke'' nodes has decreased. In the global panel, the node weights are proportional to degree.
(C) \textit{Heavy Core with Attached Cycle:} The global mean is substantially higher than the local mean (and mean degree). Here, we see in the local panel that the weight of each of the nodes has not changed, and all nodes have the same weight. However, in the global panel, we see that the high degree nodes in the complete graph has higher weight compared to the original network, whereas the weights for the nodes in the 2-cycle are lower than in the original network.
(D) \textit{Heavy Core with Pendants:} Both the local and global mean are substantially higher than mean degree. In the local panel, the edges connecting core nodes to other nodes (both core and pendant) have a relatively low weight, and are not displayed.
}}
\label{figure:two-by-two}
\end{figure}
\renewcommand{\baselinestretch}{2}

\newpage
\section{Leverage and Network Size}\label{section:leveragenetworksize2}

\renewcommand{\baselinestretch}{1} 
\begin{figure}[H]
\begin{centering}
\includegraphics[width=0.5\textwidth]{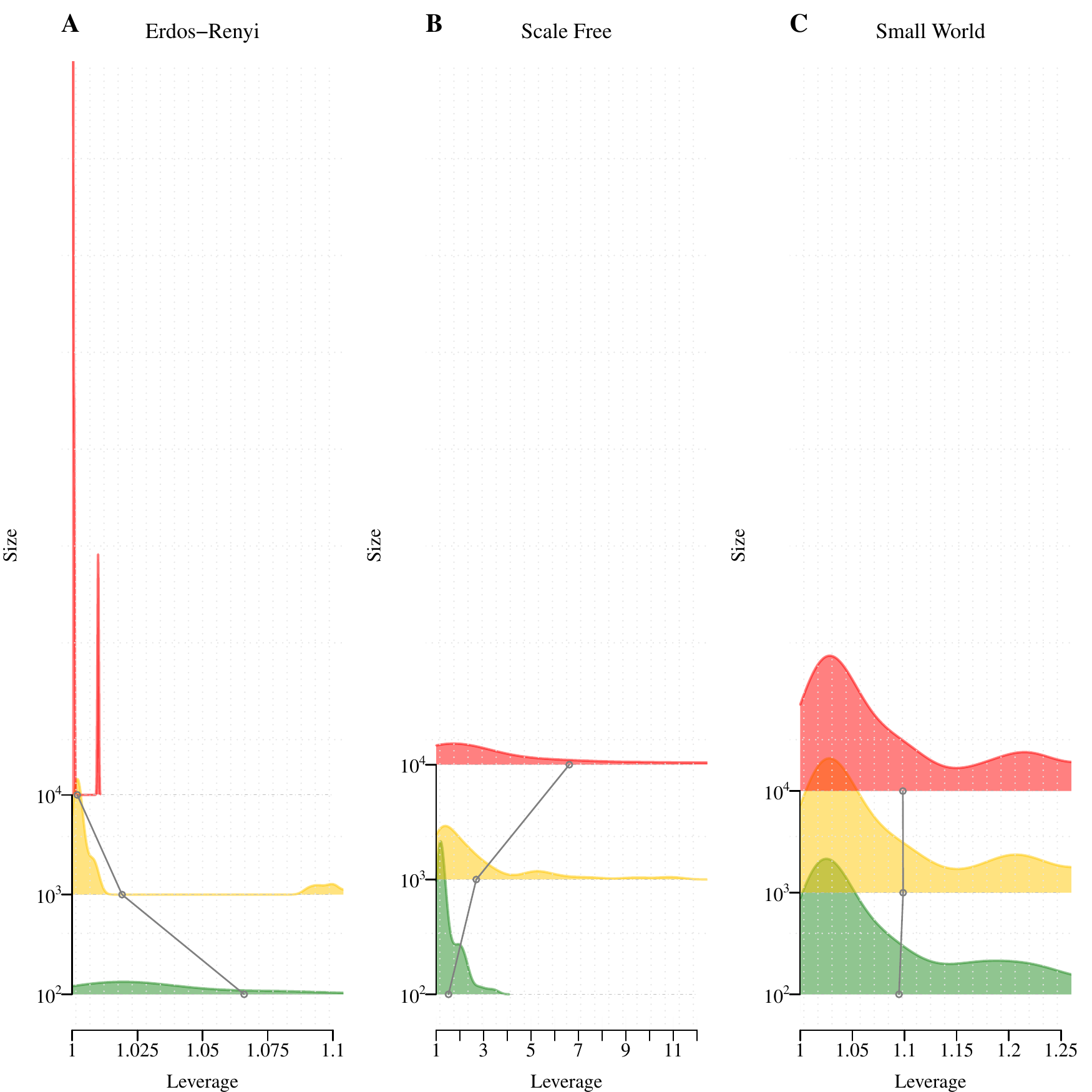}
\par\end{centering}
\caption{\fcaptionsize{
Local Leverage Density in Generated Networks from three different generative models, and spans the parameter space. A sample of 1,000 networks was used for each of the models with size varying between 100 and 10,000 nodes. (A) Erdos-Renyi (ER) networks generated with edge probabilities, $p \in [0.05,0.95]$. We find that as the size increases, the leverage decreases. In all cases, for ER networks, we find relatively low local leverage. (B) Static Scale Free (BA or Barabasi Albert) networks with scale-free parameter $\gamma \in [1,6]$. For these networks, we find that leverage increases as the size of the network increases, with networks of size 10,000 having an average leverage of over 7. (C) With small world (Watts-Strogatz) networks, the network size does not seem to materially impact leverage.
}}
\label{figure:leveragegeneratedsize}
\end{figure}
\renewcommand{\baselinestretch}{2} 

\newpage

\section{How Is Inversity Different from Degree Assortativity?}\label{section:inversityassortativity}
A natural question is whether inversity captures the same information (with opposite sign) as degree assortativity, which is a well known network property $\rho_a=Corr\left(D^{\mathbf{O}},D^{\mathbf{D}}\right)$ capturing the correlation in degree across all edges in the network \cite{newman2002assortative,newman2003social,sendina2016assortativity}. To examine this question, we generate $1,000$ networks using different generative methods as above. We find that assortativity and inversity are not guaranteed to have opposite signs (Fig. \ref{figure:inversityassortativity}A). Therefore, the sign of assortativity cannot be used to determine whether the local or global mean is greater for a network, unlike with inversity. All 3 network generating processes create networks with the same sign for both metrics (detail in Fig. \ref{figure:inversityassortativity}B). Example networks for the case of same sign assortativity and inversity are illustrated (Fig. \ref{figure:inversityassortativity}C, \ref{figure:inversityassortativity}D), showing that it is not obvious to predict inversity of a network if we know its assortativity.

\renewcommand{\baselinestretch}{1} 
\begin{sidewaysfigure}
\begin{centering}
\includegraphics[scale=0.75]{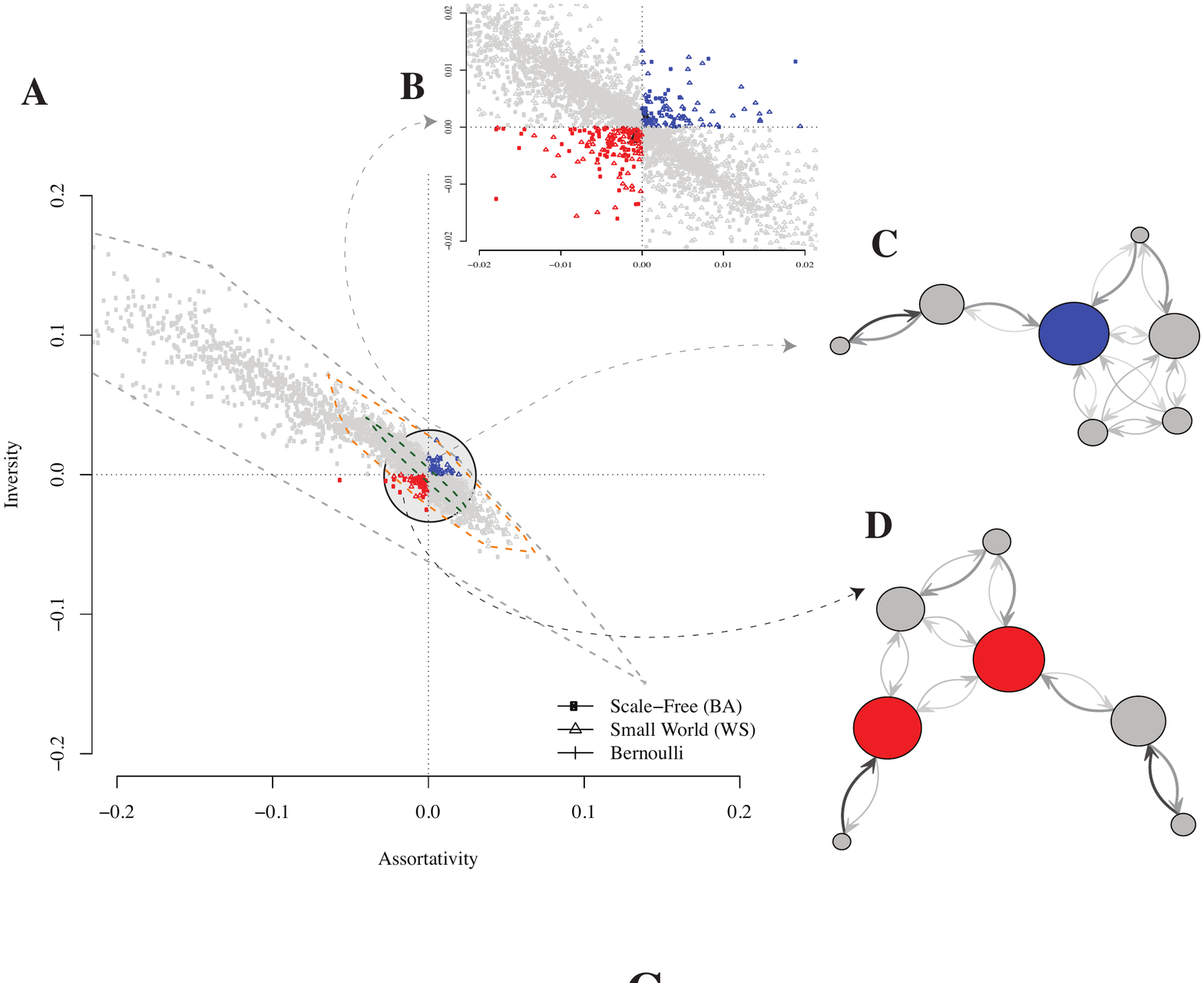}
\caption{
\fcaptionsize{
Assortativity and Inversity. N=1000 networks are generated from three classes of networks. (A) Erdos-Renyi (ER), Scale Free (BA or Barabasi Albert) and Small World (WS or Watts Strogatz), parameters detailed in legend of Figure 5. Observe the regions in red and blue, where networks have the same sign of assortativity and inversity. (B) Detailed view of region around (0,0) showing all three network types can produce networks with same sign for assortativity and inversity. (C) Example network with N=7 nodes where assortativity and inversity are both positive. (D) Similar example where both measures are negative. Overall, it demonstrates that inversity and assortativity are not equivalent measures, e.g. using assortativity in place of inversity could result in using a global strategy when local may be more appropriate.
}}
\label{figure:inversityassortativity}
\par
\end{centering}
\end{sidewaysfigure}
\renewcommand{\baselinestretch}{2} 

\newpage

\section{Virus Propagation Models}\label{ssection:vpm}

We detail below several examples of virus propagation models being used for characterizing the transmission and spread of diseases. These compartmental build upon the early work of Kermack and McKendrick \cite{m1925applications}. Thus, all individuals in a population (in our case, the nodes in a network) are in one of the states, either susceptible (S) or infected (I). Based on the viral propagation, they can move to other states like Exposed (E), Recovered (R), or Deceased (D). For example, the SIR model involves individuals being in one of three states, (S), (I) or (R) and transitioning between the states probabilistically. Typically, the vast majority of nodes are present in the susceptible state (S), in which they might contrast the disease. The exposed state (E) is used to indicate a node that has been exposed to the disease, but could be asymptomatic during an incubation period and is not capable of infecting others. In contrast, the infected state (I) indicates a node that is capable of infecting others. The (R) recovered state implies permanent immunity. 
There are further extensions possible, e.g. adding infants who have maternal antibodies (state M) that provide passive immunity. See \cite{brauer2005kermack} or \cite{hethcote2000mathematics} for an overview and survey of these models. These models have been extensively used in epidemiological studies to characterize disease dynamics as detailed in Table \ref{table:vpm}, including measles, influenza and COVID-19.

There has been recent notable work that aims to characterize the epidemic thresholds of these compartmental models with disease transmission over a network \cite{chakrabarti2008epidemic, prakash2010got}. The critical idea is that the epidemic threshold of a network can be characterized as the inverse of the greatest (first) eigenvalue of the adjacency matrix $A$ of the network, denoted as: $$\boxed{\tau(A) = \dfrac{1}{\lambda_1(A)}}$$

Eigenvalue $\lambda_1$ termed the spectral radius characterizes the connectivity of the network graph. Thus, networks that have higher connectivity or $\lambda_1$ are more likely to allow a contagion different paths to grow into an epidemic, whereas in networks with low connectivity, the epidemic is more likely to die out.

While there have been a number of epidemic thresholds for specific network generating processes (e.g. small world), the generality of the result above is valuable since it allows: (a) any arbitrary network, without placing restrictions on its topology or structure, (b) a wide range of compartmental models like SIS, SIR and others detailed in Table \ref{table:vpm} typically used to model infectious disease.

Consider a SIR model for illustration, the results also hold for the other models. The model is parametrized by two rates: $\beta$ is the probability of an infected node infecting a susceptible node in a given time period, and $\delta$ is the probability at which an infected node recovers (or is cured) during the period. If time is continuous, $\beta$ and $\delta$  can be viewed as the rates of infection and recovery. In either case, $ \mathcal{R}_0$ is defined as $ \boxed{ \mathcal{R}_0 = \dfrac{\beta}{\delta}}$.

The epidemic threshold $\tau$ is defined as follows \cite{chakrabarti2008epidemic}:\\
$$\boxed{
  \begin{cases}
    \mathcal{R}_0 = 
\dfrac{\beta}{\delta} < \tau(A) \implies \text{infection dies out over time}\\
    \mathcal{R}_0 =   
\dfrac{\beta}{\delta} > \tau(A) \implies \text{infection grows over time}\\
  \end{cases}
}
$$

There are a few observations relevant here. First, the critical value of epidemic threshold is a function of the adjacency matrix $A$ of the network topology (structure) $\mathcal{G}$. Second, a network topology with a higher epidemic threshold is less likely to have an epidemic. Third,  interventions like immunizing nodes or reducing the number of connections (edges) can increase the threshold $\tau(A)$ so that infections are more likely to die out.
\renewcommand{\baselinestretch}{1.5} 
\begin{table}[H]%
\caption{Virus Propagation Models Used for Diseases}\label{table:vpm}
\begin{tabular}{p{0.36\textwidth}p{0.54\textwidth}}
\toprule
\textbf{Virus Propagation Model}  & \textbf{Infectious Diseases [References]}  \tabularnewline
\midrule
SIS& Malaria (\cite{smith2005entomological})
\tabularnewline
SIR & Measles \cite{ferrari2008dynamics}, Swine Flu H1N1 \cite{small2005clustering}, Ebola \cite{berge2017simple}
\tabularnewline
SEIR& Chicken Pox \cite{deguen2000estimation}, SARS \cite{riley2003transmission}, COVID-19 \cite{prem2020effect}
\tabularnewline
SIRD& COVID-19 (\cite{caccavo2020chinese})
\tabularnewline
\bottomrule
\end{tabular}\\
Note: The states refer to (\textbf{S})usceptible, (\textbf{I})nfectious, (\textbf{R})ecovered / (\textbf{R})emoved, (\textbf{E})xposed, (\textbf{D})eceased 
\end{table}
\renewcommand{\baselinestretch}{2} 

\subsubsection*{Implementation of VPM} \label{ssection:vpmimplementation}
We begin with a seed set of 1\% of the nodes being infected, and evaluate epidemic outcomes using the SIR model. All the nodes in the network that are not infected or recovered are susceptible (S) to the infection.
Each infected node can transmit an infection in each period probabilistically to each of its neighbors. The probability of an infection is $P_\text{transmit} =\beta$. Thus, a node can become infected (I) from contact with any of its neighbors. In each period, an infected node can be cured or recovered (R) probabilistically, with the likelihood $P_\text{cure}  = \delta$. Recovered nodes cannot be reinfected and cannot transmit infections.

The process of immunizing (or vaccinating) a set of nodes involves choosing a proportion of nodes (5\%, or 10\% or 20\%) and ensuring that these nodes do not transmit any disease. The nodes for immunization are chosen based on three strategies: random, local and global.

The parameters used in the simulation of the epidemic are detailed in Table \ref{table:vpmsimulation}.

\renewcommand{\baselinestretch}{1.5} 
\begin{table}[H]%
\caption{Parameters of SIR Network Propagation Model }\label{table:vpmsimulation}
\begin{tabular}{p{0.15\textwidth}p{0.25\textwidth}p{0.60\textwidth}}
\toprule
\textbf{Parameter}  & \textbf{Value} & \textbf{Description}  \tabularnewline
\midrule
$P_\text{transmit} = \beta$& 0.20 & Probability of an infected node transmitting the disease to a susceptible neighbor.
\tabularnewline
$P_\text{cure} = \delta$& 0.15 & Probability of an infected node recovering. Thus, moving from (I) $\implies$ (R) is $P_{I\rightarrow R} = P_\text{cure}$, and  $P_{I\rightarrow I} = 1-P_\text{cure}$
\tabularnewline
$P^{k}_{S\rightarrow I}$& $1-(1- \beta)^{N^{\text{infected}}_k}$ & Probability of a susceptible node $k$ becoming infected. Depends on the number of infected neighbors $N^{\text{infected}}_k$. Thus, $k$ can become infected through \textit{any} of its infected neighbors. So we have: $P^{k}_{S\rightarrow I} = 1-(1-P_\text{transmit})^{N^{\text{infected}}_k}$. Similarly, $P^{k}_{S\rightarrow S} = (1-P_\text{transmit})^{N^{\text{infected}}_k}$.
\tabularnewline
$n^{0}_{\text{infected} }$& 1\% & Proportion of nodes in network that are infected at the beginning
\tabularnewline
$n_{\text{sim}}$& 100 & Number of simulations
\tabularnewline
\bottomrule
\end{tabular}
Note: (\textbf{S})usceptible, (\textbf{I})nfectious, (\textbf{R})ecovered / (\textbf{R})emoved
\end{table}
\renewcommand{\baselinestretch}{2}

Thus, a strategy $A$ is better than an alternative strategy $B$ if it results in lowers levels of peak infections, total infections and total suffering.

\section{Epidemic Outcomes on Facebook Network}\label{ssection:fbepidemic}
In Figure \ref{figure:epidemicfb}, we examine the epidemic propagation characteristics on the Facebook network \cite{viswanath09} using the same parameters as detailed in Table \ref{table:vpmsimulation}. For the Facebook network, we find that an epidemic's outcomes are better when using the local strategy compared to the global strategy, which in turn in are better than the random strategy.

\renewcommand{\baselinestretch}{1}
\begin{figure}
\begin{centering}
\includegraphics[scale=0.75]{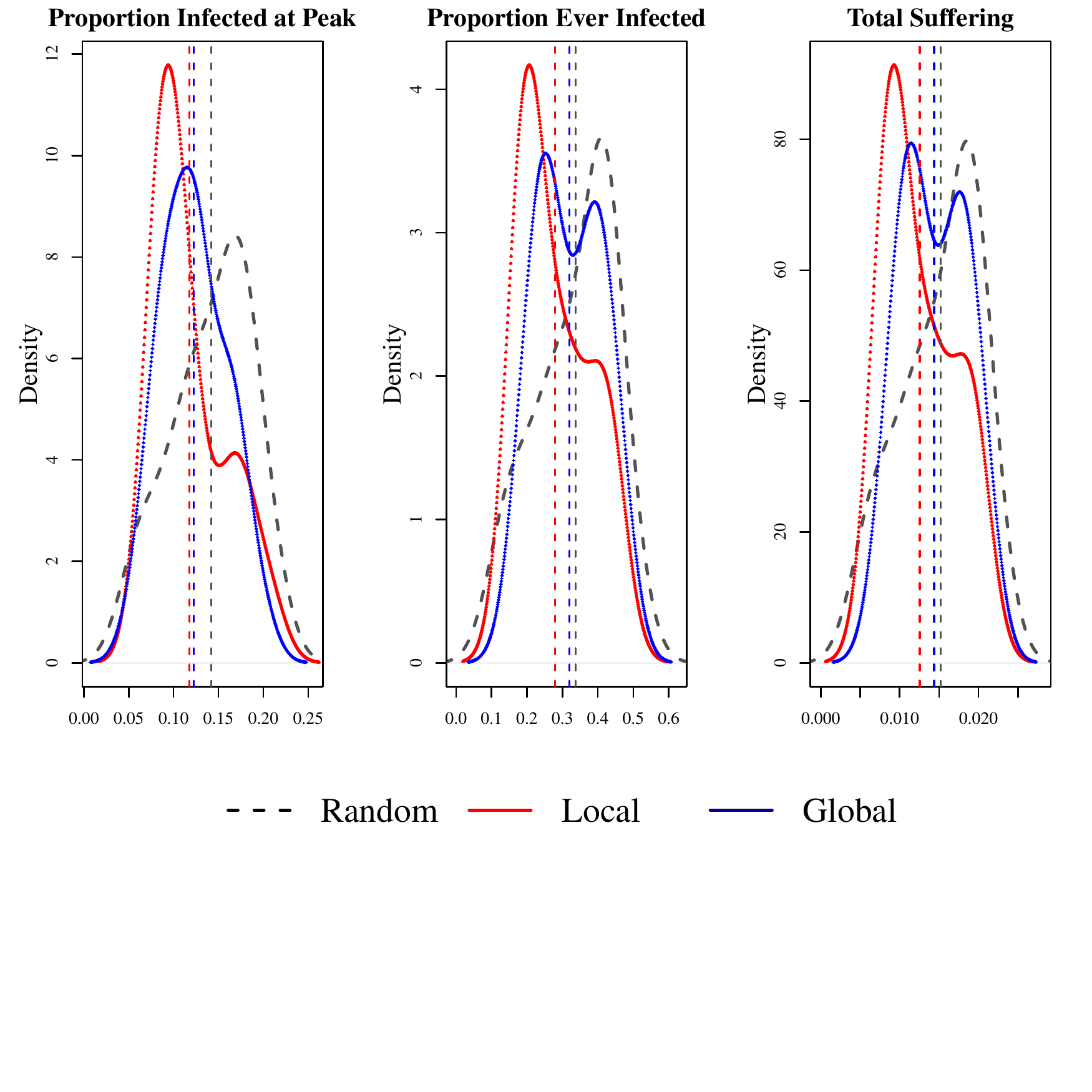}
\end{centering}
\begin{centering}
\caption{Epidemic Characteristics with Immunization in Facebook Network. See Table {table:vpmsimulation} for parameters of simulation. All outcomes are density plots. We plot 3 outcomes: (a) the proportion of population infected at the peak, (b) proportion of population that was ever infected, and (c) total suffering. The $x$-axis represent proportions and the $y$-axis represent density.  We plot the outcomes for 3 strategies: (R)andom, (L)ocal and (G)lobal. The dashed vertical lines represent the means for the 3 strategies. We find that for the Facebook network, the Local strategy is better for all outcomes than the Global, which in turn is better than the Random strategy.}
\label{figure:epidemicfb}
\end{centering}
\end{figure}
\renewcommand{\baselinestretch}{2}

\newpage

\end{document}